\documentclass[12pt]{article}
\usepackage[a4paper,left=15mm,top=20mm,right=15mm,bottom=20mm]{geometry}

\usepackage{graphicx}
\usepackage{algorithm}
\usepackage[noend]{algpseudocode}
\usepackage{amsmath, amsfonts, amssymb, amsthm}
\usepackage{mathtools}
\usepackage[colorlinks]{hyperref}
\hypersetup{
	citecolor=blue,
   linkcolor=blue,
}

\usepackage{authblk}
\usepackage[numbers,sort&compress]{natbib}

\usepackage[font=footnotesize,labelfont=bf]{caption}

\algrenewcommand\algorithmicrequire{\textbf{Input:}}
\algrenewcommand\algorithmicensure{\textbf{Output:}}

\newtheorem{theorem}{Theorem}[section]

\newtheorem{lemma}[theorem]{Lemma}
\newtheorem{definition}[theorem]{Definition}
\newtheorem{corollary}[theorem]{Corollary}

\newcommand{\eps}{\ensuremath{\varepsilon}}
\newcommand{\Eps}{\ensuremath{\mathcal{E}}}
\newcommand{\Ts}{\ensuremath{T^*}}
\newcommand{\ls}{\ensuremath{\lambda^*}}
\newcommand{\irr}[1]{\ensuremath{[#1]_{\textrm{irr}}}}
\newcommand{\lca}{\ensuremath{\operatorname{lca}}}
\newcommand{\Mo}{\ensuremath{M^{\otimes}}}
\DeclareMathOperator{\Aho}{Aho}
\newcommand{\cl}{\ensuremath{\operatorname{cl}}}
\newcommand{\IT}[1]{{R_{#1}}}

\providecommand{\keywords}[1]{\textbf{\textit{Keywords: }} #1}

\makeindex             


\begin{document}

\title{Generalized Fitch Graphs: Edge-labeled Graphs that are explained by Edge-labeled Trees}

\author[]{Marc Hellmuth}

\affil[]{\footnotesize Dpt.\ of Mathematics and Computer Science, University of Greifswald, Walther-
  Rathenau-Strasse 47, D-17487 Greifswald, Germany\\ \smallskip

	Saarland University, Center for Bioinformatics, Building E 2.1, P.O.\ Box 151150, D-66041 Saarbr{\"u}cken, Germany\\ \smallskip
	Email: \texttt{mhellmuth@mailbox.org}}

\date{\ }

\maketitle

\abstract{ 
Fitch graphs $G=(X,E)$ are di-graphs that are explained by $\{\otimes,1\}$-edge-labeled rooted trees with leaf set $X$: there is an arc $xy\in E$ if and only if the unique path in $T$ that connects the least common ancestor $\lca(x,y)$ of $x$ and $y$ with $y$ contains at least one edge with label ``$1$''. In practice, Fitch graphs represent xenology relations, i.e., pairs of genes $x$ and $y$ for which a horizontal gene transfer happened along the path from $\lca(x,y)$ to $y$.  

In this contribution, we generalize the concept of Fitch graphs and  consider complete di-graphs $K_{|X|}$  with vertex set $X$ and a map $\eps$ that assigns to each arc $xy$ a unique label  $\eps(x,y)\in M\cup \{\otimes\}$, where $M$ denotes an arbitrary set of symbols.  A di-graph $(K_{|X|},\eps)$ is a generalized Fitch graph if there is an $M\cup \{\otimes\}$-edge-labeled tree $(T,\lambda)$ that can explain $(K_{|X|},\eps)$. 

We provide a simple characterization of generalized Fitch graphs $(K_{|X|},\eps)$ and give an $O(|X|^2)$-time algorithm for their recognition as well as for the reconstruction of the unique least-resolved phylogenetic tree that explains $(K_{|X|},\eps)$. 
}
	
\smallskip
\noindent
\keywords{Labeled Gene Trees; Forbidden Subgraphs; Phylogenetics; Xenology; Fitch Graph; Recognition Algorithm}
\sloppy

\sloppy
\section{Introduction}
Edge-labeled graphs that can be explained by \emph{vertex-labeled} trees
have been widely studied and range from cographs \cite{Corneil:81,HHH+13} and di-cographs \cite{Crespelle:06} 
to so-called unp-2-structures \cite{ER1:90,ER2:90,EHPR:96}, 
symbolic ultrametrics \cite{Boecker:98,Hellmuth:16a} or 
three-way symbolic tree-maps \cite{HSM:18,GLW:18}. 
Besides their structural attractiveness, those types of graphs play an important
role in phylogenomics, i.e., the reconstruction of the evolutionary history
of genes and species.
By way of example, the concept of \emph{orthologs}, that is, pairs of genes from different species that arose
from a speciation event \cite{Fitch:70}, is
of fundamental importance in many fields of mathematical and computational
biology, including the reconstruction of evolutionary relationships across
species \cite{DBH-survey05,Hellmuth:15a} or  functional genomics and gene
organization in species \cite{GK13,TG+00}.
The orthology relation $\Theta$ is explained by vertex-labeled trees, i.e., 
a gene pair $(x,y)$ is contained in $\Theta$ if and only if the least common ancestor
of $x$ and $y$ is labeled as a speciation event. 
The graph representation of $\Theta$ must necessarily be a co-graph  \cite{HHH+13,Boecker:98} 
and provides direct information on the gene history 
as well as on the history of the species \cite{HW:16book,Hellmuth:15a}

In contrast, \emph{xenology} as defined by  Walter M. Fitch \cite{Fitch:00}
is explained by \emph{edge-labeled} rooted phylogenetic trees: a gene $y$ is xenologous with
respect to $x$,  if and only if the unique path from the least common ancestor $\lca(x,y)$
to $y$ in the gene tree contains a transfer edge. 
In other words, the xenology relation is explained by an $\{\otimes,1\}$-edge-labeled 
rooted tree, where an edge with label ``$1$'' is a transfer edge and an edge with 
label ``$\otimes$'' is a non-transfer edge. It has been shown  by Gei{\ss} et al.\ \cite{Geiss:17}
that the xenology relation forms a \emph{Fitch graph}, that is, an 
$\{\otimes,1\}$-edge-labeled di-graph which is characterized by the absence
of eight forbidden subgraphs on three vertices. 
Moreover, for a given  Fitch graph $\mathcal{F}$ it is possible to reconstruct the
unique minimally resolved 
phylogenetic tree that explains $\mathcal{F}$ in linear time.

A further example of graphs and relations that are defined in terms of edge-labeled trees are 
the single-1-relations $\mathrel{\overset{1}{\sim}}$
and $\mathrel{\overset{1}{\rightharpoonup}}$  \cite{Hellmuth:17a}. 
These relations are defined by 
the existence of a \emph{single} edge with label ``$1$'' along the
connecting path of two genes and capture the structure of so-called 
rare genomic changes (RGCs). 
RGCs  
	have been proven to be phylogenetically informative and helped to resolve many of the phylogenetic questions where
sequence data lead to conflicting or equivocal results, see e.g.\
\cite{Boore:06,Deeds:05,Donath:14a,Dutilh:08,Krauss:08a,Prohaska:04a,Rogozin:05,Rokas:00,Shedlock:00}.

In summary, edge-labeled graphs (or equivalently, binary relations) 
that can be explained by \emph{edge-labeled} trees 
provide important information about the evolutionary history of 
the underlying genes. However, for such type of graphs 
only few results are available \cite{Geiss:17,Hel:18,Hellmuth:17a}.

In this contribution, we extend the notion of xenology and Fitch graphs to 
\emph{generalized} Fitch graphs, that is, 
di-graphs that can be derived from  $\{\otimes, 1,\dots,m\}$-edge
labeled trees, or equivalently, edge-labeled di-graphs 
that can be explained by such trees.  
We show that these graphs are characterized by four simple conditions
that are defined in terms of edge-disjoint subgraphs. Moreover, 
we give an $O(|X|^2)$-time recognition algorithm for generalized 
Fitch graphs on a set of vertices $X$ and the reconstruction of the unique least-resolved 
phylogenetic tree that explains them.

\section{Preliminaries}

\subsection{Trees,  Di-Graphs and Sets}

A \emph{rooted tree} $T=(V,E)$ (on $X$) is 
an acyclic connected graphs with leaf set $X$, set of {\em inner} vertices $V^0=V\setminus X$ and
one distinguished inner vertex $\rho_T\in V^0$ that is called the \emph{root of T}.
In what follows, we consider always \emph{phylogenetic} trees $T$, that is,
rooted trees such that the root $\rho_T$ has at least
degree $2$ and every other inner vertex $v\in V^0\setminus\{\rho_T\}$ has
at least  degree $3$.

We call $u\in V$ an \emph{ancestor} of $v\in V$, $u\succeq_T v$, and $v$ a
\emph{descendant} of $u$, $v\preceq_T u$, if $u$ lies on the unique path
from $\rho_T$ to $v$. We write $v\prec_T u$ ($u\succ_T v$) for $v\preceq_T
u$ ($u\succeq_T v$) and $u\neq v$. 
Edges that are incident to a leaf are called \emph{outer
edges}. Conversely, \emph{inner edges} do only contain inner vertices.
For a non-empty subset $Y\subseteq X$ of leaves, \emph{the least common
  ancestor of} $Y$, denoted as $\lca_T(Y)$, is the unique
$\preceq_T$-minimal vertex of $T$ that is an ancestor of every vertex in
$Y$. We will make use of the simplified notation
$\lca_T(x,y):=\lca_T(\{x,y\})$ for $Y=\{x,y\}$ and we will omit the
explicit reference to $T$ whenever it is clear which tree is
considered. 
For a subset $Y\subseteq X$ of leaves, 
the tree $T(Y)$ with root $\lca_T(Y)$ has leaf set $Y$ and consists of all paths in $T$
that connect the leaves in $Y$.  
The \emph{restriction} $T_{|Y}$ of $T$ to some subset $Y\subseteq X$ 
 is the rooted tree obtained from $T(Y)$ by suppressing all vertices of
degree $2$ with the exception of the root $\rho_T$ if
$\rho_T\in V(T(Y))$.

A \emph{contraction} of an edge $e = xy$ in a tree $T$ refers to the removal of $e$ and
identification of $x$ and $y$.
We say that a rooted tree $T$ on $L$ \emph{displays} a root tree $T'$ on $L'$, in symbols $T'\le T$, if $T'$ can
be obtained from $T(L')$ by a sequence of edge contractions.  
If $T'\le T$, then we also say that $T$ \emph{refines} $T'$. 

\emph{Rooted triples} are binary rooted phylogenetic trees on three leaves.
We write $ab|c$ for the rooted triple with leaves $a,b$ and $c$, if the
path from its root to $c$ does not intersect the path from $a$ to $b$.  The
definition of ``display'' implies that a triple $ab|c$ with $a,b,c\in L$ is
\emph{displayed} by a rooted tree $T$ if $\lca(a,b)\prec_T \lca(a,b,c)$.
The set of all triples that are displayed by $T$ is denoted by $r(T)$.
A set of rooted triples $R$ is called \emph{consistent} if there exists a phylogenetic tree $T$ on
$L_R\coloneqq\bigcup_{ab|c\in R} \{a,b,c\}$ that displays $R$, i.e.,
$R\subseteq r(T)$.  As shown in
\cite{aho_inferring_1981} there is a polynomial-time algorithm, usually
referred to as \texttt{BUILD}
\cite{semple_phylogenetics_2003,steel_phylogeny:_2016}, that takes a set
$R$ of triples as input and either returns a particular phylogenetic tree
$\Aho(R)$ that displays $R$, or recognizes $R$ as inconsistent.

A set of rooted triples $R$ \emph{identifies} a tree $T$ with leaf set
$L_R$ if $R$ is displayed by $T$ and every other tree $T'$ that displays
$R$ is a refinement of $T$. A rooted triple $ab|c\in r(T)$
\emph{distinguishes} an edge $uv$ in $T$ iff $a$, $b$, and $c$ are
descendants of $u$; $v$ is an ancestor of $a$ and $b$ but not of $c$; and
there is no descendant $v'$ of $v$ for which $a$ and $b$ are both
descendants. In other words, $ab|c\in r(T)$ distinguishes the edge $uv$
if $\lca(a,b)=v$ and $\lca(a,b,c)=u$.

The requirement that a set $R$ of triples is consistent, and thus, that
there is a tree displaying all triples, makes it possible to infer new
triples from the trees that display $R$ and to define a \emph{closure
  operation} for $R$
\cite{grunewald_closure_2007,bryant_extension_1995,HS:17,Bryant97}.  Let
$\langle R \rangle$ be the set of all rooted trees with leaf set $L_R$ that
display $R$.  The closure of a consistent set of rooted triples $R$ is
defined as
\begin{equation*}
  \cl(R) = \bigcap_{T\in \langle R\rangle} r(T).
\end{equation*}
Hence, a triple $r$ is contained in the closure $\cl(R)$ if all trees that
display $R$ also display $r$.  This operation satisfies the usual three
properties of a closure operator \cite{bryant_extension_1995}, namely: (i)
expansiveness, $R \subseteq \cl(R)$; (ii) isotony, $R' \subseteq R$ implies
that $\cl(R')\subseteq \cl(R)$; and (iii) idempotency,
$\cl(\cl(R))=\cl(R)$. Since $T \in \langle r(T)\rangle$, it is easy to see
that $\cl(r(T))=r(T)$ and thus, $r(T)$ is always closed.

For later reference , we give here an important result 
from \cite{grunewald_closure_2007} that is
closely related to the \texttt{BUILD} algorithm.
\begin{lemma}\label{g1}
  Let $T$ be a phylogenetic tree and let $R$ be a set of rooted triples.
  Then, $R$ identifies $T$ if and only if $\cl(R)=r(T)$.  Moreover, if $R$
  identifies $T$, then $\Aho(R)=T$.
\end{lemma}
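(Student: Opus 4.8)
The plan is to derive everything from the standard correspondence between refinement and triple sets: for phylogenetic trees $T,T'$ on the same leaf set one has $T\le T'$ if and only if $r(T)\subseteq r(T')$, and in particular $r(T)=r(T')$ forces $T=T'$. I take this, together with the elementary fact that $T$ displays $R$ iff $R\subseteq r(T)$, as known.

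First I would prove the equivalence. Assume $R$ identifies $T$; then $T$ displays $R$, so $T\in\langle R\rangle$ and hence $\cl(R)=\bigcap_{T'\in\langle R\rangle} r(T')\subseteq r(T)$. For the reverse inclusion, every $T'\in\langle R\rangle$ is a refinement of $T$ by the identification property, so $r(T)\subseteq r(T')$; intersecting over all such $T'$ gives $r(T)\subseteq\cl(R)$, whence $\cl(R)=r(T)$. Conversely, assume $\cl(R)=r(T)$. Expansiveness gives $R\subseteq\cl(R)=r(T)$, so $T$ displays $R$ and has leaf set $L_R$. Now let $T'$ be any phylogenetic tree on $L_R$ that displays $R$; then $T'\in\langle R\rangle$, so $r(T)=\cl(R)\subseteq r(T')$, and therefore $T\le T'$. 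Thus every tree displaying $R$ refines $T$, i.e.\ $R$ identifies $T$.

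It remains to show that $R$ identifies $T$ implies $\Aho(R)=T$. Since $T$ displays $R$, the set $R$ is consistent, so $\Aho(R)$ is well defined and displays $R$; by the identification property $T\le\Aho(R)$, so it suffices to prove $\Aho(R)\le T$, i.e.\ $r(\Aho(R))\subseteq r(T)=\cl(R)$. I would obtain this from two observations about \texttt{BUILD}. \emph{(i)} \texttt{BUILD} reconstructs a tree from its full triple set: $\Aho(r(T))=T$. This follows by induction on $|X|$, because in the auxiliary (Aho) graph on $X$ built from $r(T)$ the connected components are exactly the leaf sets of the subtrees rooted at the children of $\rho_T$ --- any triple joining two such sets would force an ancestor strictly above $\rho_T$, which is impossible, while any two leaves inside the same such set $C$ are joined via the triple formed with any leaf outside $C$ --- and on each $C$ the triples of $r(T)$ with all three leaves in $C$ are precisely $r(T_{|C})$, so the claim follows from the inductive hypothesis applied to $T_{|C}$. \emph{(ii)} \texttt{BUILD} is unchanged by enlarging the triple set as long as its output still displays the larger set: if $R\subseteq S\subseteq r(\Aho(R))$ then $\Aho(S)=\Aho(R)$. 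Indeed, no triple $xy|z\in S$ can join two distinct components of the auxiliary graph of $R$, since such a triple would lie in $r(\Aho(R))$ while $\lca_{\Aho(R)}(x,y)$ is the root of $\Aho(R)$; hence $R$ and $S$ produce the same components at the top level, and, since restricting $\Aho(R)$ to a component $C$ yields $\Aho(R|_C)$ and $R|_C\subseteq S|_C\subseteq r(\Aho(R|_C))$, one concludes by induction on the recursion depth.

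Combining the pieces then finishes the argument: $\Aho(R)\in\langle R\rangle$ gives $\cl(R)\subseteq r(\Aho(R))$, so $R\subseteq\cl(R)\subseteq r(\Aho(R))$ and observation \emph{(ii)} yields $\Aho(\cl(R))=\Aho(R)$; observation \emph{(i)} together with $\cl(R)=r(T)$ yields $\Aho(\cl(R))=\Aho(r(T))=T$; hence $\Aho(R)=T$. The main obstacle is not the equivalence --- that is essentially bookkeeping with the refinement/triple correspondence --- but the two structural facts about \texttt{BUILD}, and in particular the inductive verification that adding closure triples never merges components of the auxiliary graph; the alternative is simply to cite these properties of \texttt{BUILD} from the references already listed.
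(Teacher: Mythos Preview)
The paper does not prove this lemma at all: it is stated as ``an important result from \cite{grunewald_closure_2007}'' and simply cited, so there is no in-paper proof to compare against. Your argument is therefore strictly more than what the paper provides, and it is essentially correct. The equivalence part is the standard unwinding of the definitions via the refinement--triple correspondence $T\le T'\Leftrightarrow r(T)\subseteq r(T')$, which is exactly how one proves it in the source you would be citing. For the $\Aho(R)=T$ part, your two structural facts about \texttt{BUILD} --- that $\Aho(r(T))=T$ and that $\Aho$ is stable under enlarging $R$ within $r(\Aho(R))$ --- are precisely the ingredients used in the literature (e.g.\ Semple--Steel, Gr\"unewald et~al.), and your inductive sketches for them are sound; the key point, that a closure triple $xy|z$ with $x,y$ in different top-level components of the Aho graph would force $\lca_{\Aho(R)}(x,y)=\rho_{\Aho(R)}$ and hence $xy|z\notin r(\Aho(R))$, is correctly identified.

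One small gap worth tightening: in the backward direction of the equivalence you assert that $\cl(R)=r(T)$ forces $T$ to have leaf set $L_R$, but you do not argue it. It is true (any leaf of $T$ outside $L_R$ would, unless $T$ is a star, appear in some triple of $r(T)$ but not of $\cl(R)$; and any $y\in L_R$ occurs in some triple of $R\subseteq r(T)$ and hence is a leaf of $T$), but since the definition of ``identifies'' in the paper explicitly requires the leaf set to be $L_R$, you should either add this one-line check or, more simply, state up front that $T$ is assumed to be a phylogenetic tree on $L_R$, which is the implicit convention both here and in the cited reference.
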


In this contribution, we will
consider phylogenetic trees $T=(V,E)$ together with an edge-labeling map 
$\lambda:E\to M\cup\{\otimes\}$,
where $M=\{1,\dots |M|\}$ denotes a non-empty set of symbols and  
we write $(T,\lambda)$. Edges that have label $m\in M\cup \{\otimes\}$ 
are called \emph{$m$-edges}. Furthermore, $\Mo$ will always denote 
the set $M\cup \{\otimes\}$.

For a di-graph $G=(V,E)$ and a subset $W\subseteq V$ we denote with $G[W] = (W,F)$ 
the \emph{induced subgraph} of $G$, i.e., any arc $xy\in E$ with $x,y\in W$
is also contained in $G[W]$.

In what follows,  $\irr{X\times X}$ denotes the set 
$(X\times X)\setminus \{(x,x)\mid x\in X\}$. 
To avoid trivial cases, we always assume that $|X|>1$. 
The sets $X_1,\dots,X_k$ form a \emph{quasi-partition} of $X$, 
if all sets are pairwisely disjoint, their union is $X$ and 
at most one $X_i$ is empty.


\subsection{Simple Fitch Graphs}

Let $\lambda:E\to \{1,\otimes\}$ be a map and $(T,\lambda)$ be an edge-labeled phylogenetic tree on $X$. 
We set $(x,y)\in\mathcal{X}_{(T,\lambda)}$ for $x,y\in X$ 
whenever the uniquely defined path from
$\lca_T(x,y)$ to $y$ contains at least one 1-edge. 
By construction $\mathcal{X}_{(T,\lambda)}$ is irreflexive; hence it can be regarded
as a simple directed graph.

An arbitrary di-graph $G=(X,E)$ is \emph{explained} by a phylogenetic tree $(T,\lambda)$ (on $X$) and called 
\emph{simple Fitch graph}, whenever $xy\in E$ if and only if $(x,y)\in \mathcal{X}_{(T,\lambda)}$. 
Fitch graphs are the topic of Ref.\ \cite{Geiss:17}, which among other results gave
a characterization in terms of eight forbidden induced subgraphs. The following theorem
summarizes a couple of important results that we need for later reference.

\begin{theorem}[\cite{Geiss:17}]
A given di-graph $G=(X,E)$ is a simple Fitch graph if and only if it does not contain
one the graphs $F_1,\dots, F_8$ (shown in Fig.\ \ref{fig:forb}) as an
induced subgraph. 

Deciding whether $G$ is a simple Fitch graph and, in the positive
case, to construct the unique least-resolved tree $(T,\lambda)$ that explains $G$ can be done
in $O(|X|+|E|)$ time.  

$(T,\lambda)$ is a least-resolved tree that explains $G$, i.e., 
there is no edge-contracted version $T'$ of $T$ and no labeling $\lambda'$
such that $(T',\lambda')$ still explains $G$, 
if and only if 
all its inner edges are 1-edges and for every inner edge $(u,v)$ 
there is an outer $\otimes$-edge $(v,x)$ in   $(T,\lambda)$.
\label{thm:1-Fitch}
\end{theorem}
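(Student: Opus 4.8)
The final statement to prove is Theorem~\ref{thm:1-Fitch}, specifically the characterization of least-resolved trees: $(T,\lambda)$ is least-resolved for $G$ if and only if every inner edge is a $1$-edge and every inner edge $(u,v)$ has an adjacent outer $\otimes$-edge $(v,x)$.

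\textbf{Proof plan.} The plan is to prove the two directions separately, using throughout the basic fact that contracting an edge $e=(u,v)$ in $(T,\lambda)$ and re-labeling produces a tree that still explains $G$ exactly when the contraction does not change any relation $(a,b)\in\mathcal{X}_{(T,\lambda)}$.

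For the \emph{necessity} direction (least-resolved $\Rightarrow$ the two conditions hold), I would argue by contraposition. First, suppose some inner edge $e=(u,v)$ is a $\otimes$-edge. Then no path from a least common ancestor passing through $e$ gains or loses a $1$-edge by contracting $e$, so contracting $e$ yields a strictly less-resolved tree that still explains $G$; hence $(T,\lambda)$ is not least-resolved. So assume all inner edges are $1$-edges, and suppose some inner edge $e=(u,v)$ (with $v$ the child) has no adjacent outer $\otimes$-edge at $v$, i.e.\ every edge $(v,w)$ leaving $v$ downward is either an inner edge (hence a $1$-edge) or an outer $1$-edge — in any case a $1$-edge. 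I would then show that contracting $e$ and keeping the remaining labels still explains $G$: the only relations potentially affected are pairs $(a,b)$ where the path from $\lca(a,b)$ to $b$ uses $e$; after contraction $e$ is gone, but because every edge below $v$ is a $1$-edge, the path from the new merged vertex to $b$ still traverses a $1$-edge (the first edge of the old path below $v$), so $(a,b)$ remains in the relation; and pairs whose defining path did not use $e$ are plainly unaffected. This shows $(T,\lambda)$ is not least-resolved, completing the contrapositive.

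For the \emph{sufficiency} direction (the two conditions $\Rightarrow$ least-resolved), suppose all inner edges are $1$-edges and each inner edge $(u,v)$ has an outer $\otimes$-edge $(v,x_{uv})$. I want to show no single edge contraction (followed by any re-labeling) preserves $G$; contracting one inner edge suffices to consider since contracting an outer edge is not a valid operation on a phylogenetic tree unless it merges a leaf, which changes $X$. Take an inner edge $e=(u,v)$ and let $x=x_{uv}$ be its witnessing outer $\otimes$-leaf at $v$. Pick any leaf $z$ that is a descendant of $u$ but not of $v$ — such $z$ exists since $u$ has at least two children in a phylogenetic tree and $e$ leads to only one of them. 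Then $\lca(x,z)=u$ and the path from $u$ to $x$ is exactly $e$ followed by the $\otimes$-edge $(v,x)$, which contains the $1$-edge $e$; hence $(z,x)\in\mathcal{X}_{(T,\lambda)}$, so $xz\in E$. After contracting $e$, the vertices $u$ and $v$ are identified, so the path from the new $\lca(x,z)$ to $x$ is just the single $\otimes$-edge $(v,x)$, which contains no $1$-edge regardless of how we re-label that edge — re-labeling it $1$ would instead create spurious arcs, as I would check by exhibiting another leaf $y\preceq v$ (again using degree $\ge 3$ at $v$, or handling the small case where $v$ has $x$ and exactly one other child): then $(y,x)$ would wrongly enter the relation. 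Either way $G$ is not preserved, so the contraction of $e$ is not admissible, and since $e$ was arbitrary, $(T,\lambda)$ is least-resolved.

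\textbf{Main obstacle.} The delicate point is the sufficiency direction: one must be careful that re-labeling after contraction is fully general, so the argument has to rule out \emph{every} label the merged/remaining edges could receive, not just the ``natural'' one. The cleanest way is to isolate, for a fixed inner edge $e=(u,v)$ with witness $\otimes$-leaf $x$ at $v$, a pair of leaves $z\not\preceq v$, $z\preceq u$ and (if needed) a second leaf $y\prec v$ with $y\neq x$, and track the membership of $(z,x)$ and $(y,x)$ in the Fitch relation before and after contraction; the two constraints these impose on the label of the residual outer edge at $x$ are contradictory. I would also need the trivial observation that contracting an \emph{outer} edge is not permitted (it would delete a leaf), so only inner-edge contractions are at stake — this is what makes "all inner edges are $1$-edges plus the witness condition" exactly the obstruction to any further simplification.
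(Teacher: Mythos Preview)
The paper does not prove Theorem~\ref{thm:1-Fitch}; it is quoted from \cite{Geiss:17} as background, so there is no in-paper argument to compare against. Your proposal addresses only the third assertion (the least-resolved characterization), and the ideas you outline are essentially the right ones and match the standard proof.

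There is one genuine gap in your sufficiency direction. You write that ``contracting one inner edge suffices to consider,'' but your stated reason (outer-edge contractions are invalid) only explains why you may restrict to \emph{inner} edges, not why you may restrict to a \emph{single} contraction. The definition of least-resolved allows contracting several edges and then relabeling, and it is a priori conceivable that two simultaneous contractions succeed while each one alone fails. The fix is short but should be made explicit: if $(T',\lambda')$ explains $G$ and $T'$ arises from $T$ by contracting a set $F$ of edges, pick any $e\in F$, let $T''$ be $T$ with only $e$ contracted, and define $\lambda''$ by pulling back $\lambda'$ along the natural bijection between $E(T')$ and $E(T'')\setminus\{e\}$ while setting $\lambda''(f)=\otimes$ on the edges of $T''$ that are still to be contracted to reach $T'$; then $(T'',\lambda'')$ also explains $G$ (each $\lca$-to-leaf path in $T''$ differs from the corresponding path in $T'$ only by extra $\otimes$-edges). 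Hence it is enough to rule out single contractions.

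A second, smaller point: in the re-labeling case you should state clearly that the label of the residual outer edge $(\,\text{merged vertex},x)$ alone forces a contradiction, independently of how all other edges are relabeled. You essentially do this, since the two constraints you derive --- from the pair $(z,x)$ the edge must be a $1$-edge, from the pair $(y,x)$ it must be a $\otimes$-edge --- involve only that single edge; just make the independence from the remaining labels explicit. Also note that your witness $y\prec v$, $y\neq x$, always exists because $v$ is an inner vertex of a phylogenetic tree and hence has at least two children; you flag this but should not leave it as a ``small case'' to be handled separately, since it is immediate.
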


\begin{figure}[tbp]
\begin{center}
  \includegraphics[width=0.6\textwidth]{./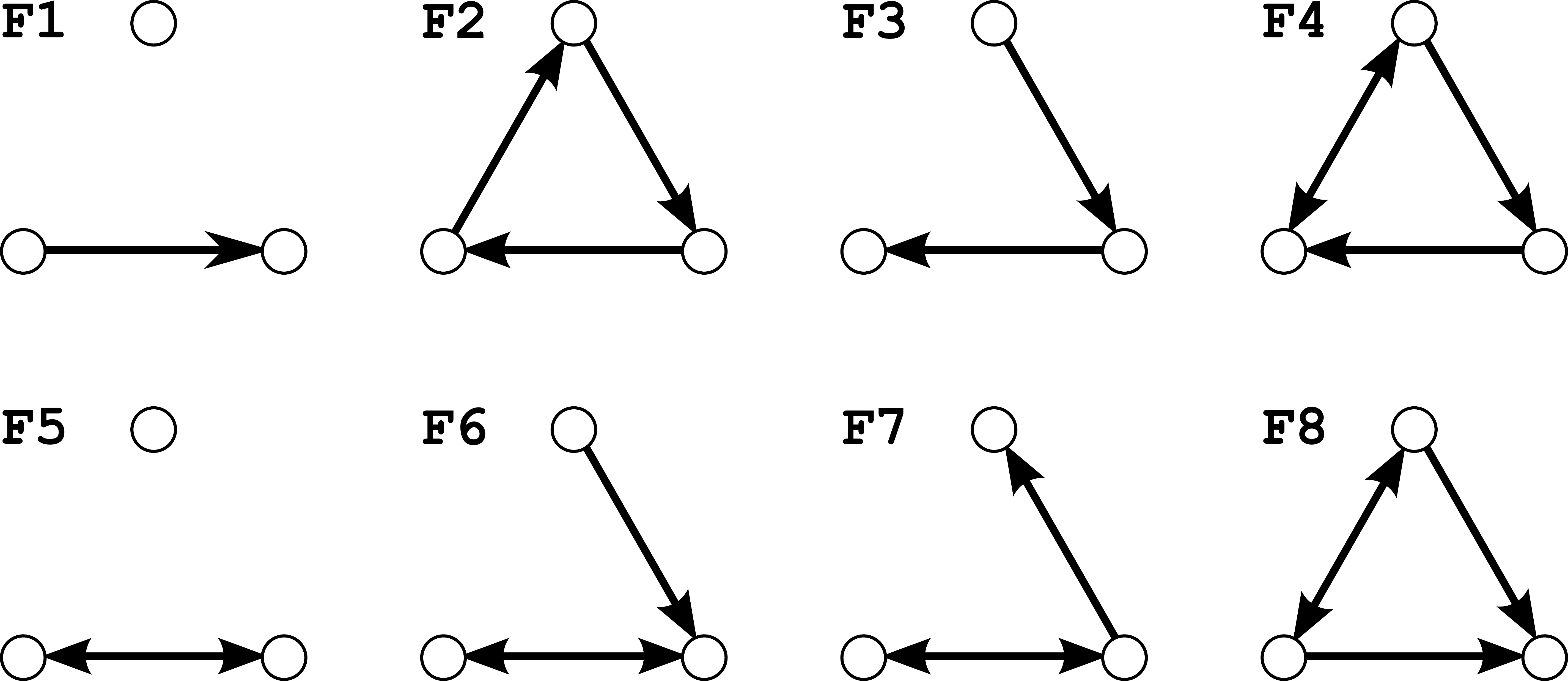}
\end{center}
\caption{Shown are the eight forbidden induced subgraphs $F_1,\dots,F_8$ of Fitch graphs.}
\label{fig:forb}
\end{figure}

\section{Generalized Fitch Graphs}

To generalize the notion of simple Fitch graphs, we
consider complete di-graphs $(K_{|X|},\eps)$ with vertex $X$, arc set $\irr{X\times X}$ and 
a map $\eps:\irr{X\times X}\to \Mo$ that assigns to each arc $xy$ a unique label
$\eps(x,y)$. Clearly, the map $\eps$ covers all information provided by $(K_{|X|},\eps)$. 
W.l.o.g.\ we will always assume that for each $m\in M$ there is at least one pair  $(x,y)\in \irr{X\times X}$
such that $\eps(x,y) = m$. 

\begin{definition}
Let $\eps:\irr{X\times X}\to \Mo$ be a map. For a given phylogenetic tree $(T,\lambda)$ with $\lambda:E\to \Mo$ and
two leaves $x$ and $y$ we denote with $\mathbb{P}_{(x,y)}$ the unique path in $T$ from 
$\lca_T(x,y)$ to $y$. 
A pair $(x,y)\in \irr{X\times X}$ is \emph{explained by} a phylogenetic tree $(T,\lambda)$ on $X$
whenever, 
\begin{description}
	\item  $\eps(x,y) = m\in M$ iff some edge $e$ on the path $\mathbb{P}_{(x,y)}$ has 
					label $\lambda(e)=m$; and 
	\item $\eps(x,y) = \otimes$ iff none of the edges $e$ on the path $\mathbb{P}_{(x,y)}$
				have label $\lambda(e)\in M$. 
 \end{description}
  
The map $\eps$ is \emph{tree-like} if each pair $(x,y)\in \irr{X\times X}$ is explained by $(T,\lambda)$. 
In this case,  we say  that $(T,\lambda)$ \emph{explains} $\eps$
and $(K_{|X|},\eps)$ is a \emph{(generalized) Fitch graph}. 

Moreover, a tree $(T,\lambda)$ is \emph{least-resolved} for a map $\eps$, 
if $(T,\lambda)$ explains $\eps$ and there is no tree $(T',\lambda')$
that explains $\eps$, where $T'$ is obtained from $T$ by contracting
edges and $\lambda'$ is an $M\cup \{\otimes\}$-edge-labeling map. 
\end{definition}

Figure \ref{fig:exmpl} shows an example of a generalized Fitch graph $(K_{|X|},\eps)$. 
We give the following almost trivial result for later reference. 

\begin{lemma}
   Let $\eps:\irr{X\times X}\to \Mo$ be tree-like and  	$(T,\lambda)$ be a tree that explains 
	 $\eps$. If there is an edge $e$ with $\lambda(e)=m$ on the path $P$ from the root $\rho_T$ to some leaf, then all edges in $P$ are either labeled with $m$ or $\otimes$.  	
\label{lem:path-one-color}
\end{lemma}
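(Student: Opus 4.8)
The plan is to reduce the statement to a single application of the definition of ``explained by'', applied to a carefully chosen pair of leaves. First I would fix the leaf $y$ at which the path $P$ from $\rho_T$ to $y$ ends and observe that $P$ itself arises as a path of the form $\mathbb{P}_{(x,y)}$. Indeed, since $T$ is phylogenetic, $\rho_T$ has degree at least $2$; letting $c'$ be the child of $\rho_T$ with $y \preceq_T c'$, there is another child $c \neq c'$ of $\rho_T$, and picking any leaf $x$ with $x \preceq_T c$ (such a leaf exists because $c$ either is a leaf or has a leaf descendant) we get $\lca_T(x,y) = \rho_T$. Consequently $\mathbb{P}_{(x,y)}$, the path from $\lca_T(x,y)$ to $y$, is exactly $P$.

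Next I would argue by contradiction. Suppose $P$ contains, besides the given $m$-edge $e$ with $m \in M$, a further edge $e'$ with $\lambda(e') = m' \in M$ and $m' \neq m$. Since $\eps$ is tree-like and $(T,\lambda)$ explains $\eps$, the pair $(x,y)$ is explained by $(T,\lambda)$, so the first item of the definition holds for every symbol in $M$. Applying it with the symbol $m$: since $e$ lies on $\mathbb{P}_{(x,y)} = P$ and has label $m$, we obtain $\eps(x,y) = m$. Applying it with the symbol $m'$: since $e'$ lies on $\mathbb{P}_{(x,y)}$ and has label $m'$, we obtain $\eps(x,y) = m'$. But $\eps$ is a map, so $\eps(x,y)$ is a single element of $\Mo$, forcing $m = m'$ -- a contradiction. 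Hence no edge of $P$ carries a label in $M \setminus \{m\}$, which means every edge of $P$ is an $m$-edge or a $\otimes$-edge, as claimed.

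The only point requiring a bit of care -- and the closest thing here to an obstacle -- is the first step: one must check that the whole root-to-leaf path $P$ really is realized as some $\mathbb{P}_{(x,y)}$, since the definition only constrains edges along such paths. This is precisely where the phylogenetic assumption (the root has degree at least $2$) together with $|X|>1$ is used. Everything after that is a direct unravelling of the definition of ``explained by'', exploiting that $\eps$ is a function so that the two derived equalities $\eps(x,y)=m$ and $\eps(x,y)=m'$ cannot both hold unless $m=m'$.
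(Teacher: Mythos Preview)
Your proof is correct and follows essentially the same approach as the paper's: pick a leaf in a different subtree of the root so that $\lca_T$ is $\rho_T$ and hence $\mathbb{P}_{(x,y)}=P$, then derive two distinct values of $\eps$ on the same pair, contradicting that $\eps$ is a map. The paper's argument is identical up to the names of the two leaves being swapped; your version is slightly more explicit about why a second leaf exists (invoking the phylogenetic assumption), but the substance is the same.
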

\begin{proof}
	Let $P$ be the path from the root $\rho_T$ to the leaf $x\in X$.
	Let $v$ be the child of $\rho_T$ that is an ancestor of $x$. 
	Now let $y\in X$ be any leaf that is not a descendant of $v$ and 
	thus $\lca_T(x,y) = \rho_T$.  
	Assume, for contradiction, that there are two edges in $P$ with distinct labels 
	$m,m'\in M$. 	
	Since  $(T,\lambda)$ explains $\eps$ we would have $\eps(y,x) = m$ and
	$\eps(y,x) = m'$; a contradiction to $\eps$ being a map. 
 \end{proof}

For each symbol $s\in \Mo$ we define the following set
\begin{align*}
		X_s \coloneqq \{x\in X \mid &\text{ there is a vertex } z\in X \text{ with } \eps(z,x) = s \\
																&\text{ and for all }	z'\in X\setminus\{z,x\} \text { we have }  \eps(z',x)\in \{\otimes, s\} \}
\end{align*}
that contains for each symbol $s$ those vertices $x\in X$ where at least one incoming arc is labeled $s$
and all other incoming arcs have label $s$ or $\otimes$. 
Note, by construction for all $x,y\in X_{\otimes}$
 we have $\eps(x,y) = \eps(y,x)= \otimes$
and for all $x,y\in X_m$, $m\in M$
 we have  $\eps(x,y), \eps(y,x) \in \{m, \otimes\}$.

The intuition behind the sets $X_s$ is sketched in Fig.\ \ref{fig:intuition}. 
In this example, let $(K_{|X|},\eps)$ be the Fitch graph that is explained by the sketched tree and 
assume that the highlighted $m$-edge $e$ with $m\neq \otimes$
is the first $m$-edge that lies on the path from the root to any of the leaves that
are located below this edge. Lemma \ref{lem:path-one-color} implies that
all edges on this path that are above $e$ must be $\otimes$-edges and all
edges below $e$ must either be $\otimes$- or $m$-edges. 
This observation implies that every leaf $z$ located in the subtree $T'$
must ``point to'' to every leaf $x \in X'_m$ via an $m$-edge in $(K_{|X|},\eps)$, i.e., $\eps(z,x)=m$. 
Moreover, for any two vertices $z',x\in X'_m$ we have $\eps(z',x) \in  \{\otimes, m\}$.
Thus, the set $X'_m$ in Fig. \ref{fig:intuition} is a subset of the (possibly 
larger) set $X_m$.
\begin{figure}[tbp]
\begin{center}
  \includegraphics[width=0.4\textwidth]{./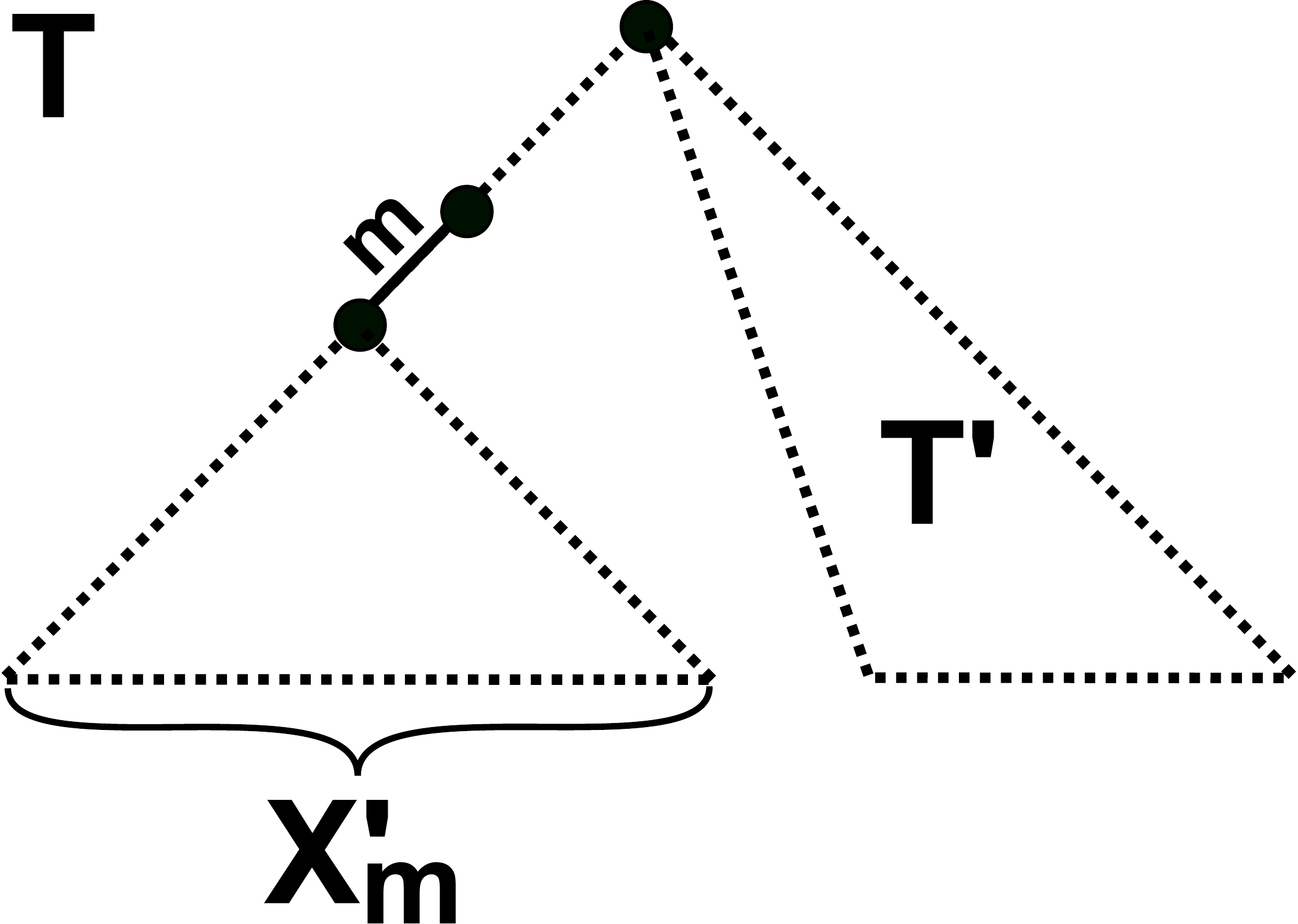}
\end{center}
\caption{Shown is a sketched tree $T$, where the highlighted $m$-edge $e$ is the first $m$-edge
					that is located on the path from the root to any of the leaves below $e$. 
					Thus, $X'_m\subseteq X_m$, see text for further details.}
\label{fig:intuition}
\end{figure}

\begin{lemma}
	Let $\eps:\irr{X\times X} \to \Mo$  be a tree-like map
	and $(T,\lambda)$ a tree that explains $\eps$. 
	Then,  for all $m\in M$ we have $X_m\neq \emptyset$
	and $X_m =  \{x\in X \mid \exists z\in X \text{ with } \eps(z,x) = m\}$. 
	In particular, 
	the sets $X_1,X_2,\dots X_{|M|},X_{\otimes}$ form a quasi-partition of $X$.

	Moreover, for all $x\in X_m$ and $y\in X\setminus X_{m}$ with $m\in \Mo$ 
	it holds that $\eps(y,x) = m$.
\label{lem:basics}
\end{lemma}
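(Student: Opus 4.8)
The plan is to exploit the structure forced by Lemma~\ref{lem:path-one-color}. First I would establish the non-triviality $X_m\neq\emptyset$: by our standing assumption there is some pair $(z,x)$ with $\eps(z,x)=m$; since $(T,\lambda)$ explains $\eps$, the path $\mathbb{P}_{(z,x)}$ from $\lca_T(z,x)$ to $x$ contains an $m$-edge, and no edge on it carries a label in $M\setminus\{m\}$ (otherwise $\eps(z,x)$ would be ambiguous). I would then walk up from $x$ towards the root and locate the \emph{topmost} $m$-edge $e=(u,v)$ on the root-to-$x$ path; Lemma~\ref{lem:path-one-color} tells us every edge on that root-path is labeled $m$ or $\otimes$, so in particular $e$ is well defined and every edge between $v$ and $x$ is an $m$- or $\otimes$-edge.

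The heart of the argument is the set identity $X_m=\{x\in X\mid\exists z\in X,\ \eps(z,x)=m\}$ together with the last claim $\eps(y,x)=m$ for $x\in X_m$, $y\notin X_m$. The inclusion $X_m\subseteq\{x\mid\exists z,\ \eps(z,x)=m\}$ is immediate from the definition of $X_m$. For the reverse inclusion, take any $x$ with $\eps(z,x)=m$ for some $z$, fix the topmost $m$-edge $e=(u,v)$ on the root-to-$x$ path as above, and let $x'$ be \emph{any} vertex with $\eps(x',x)\neq\otimes$. I claim $\eps(x',x)=m$: the path $\mathbb{P}_{(x',x)}$ ends in the segment from $v$ down to $x$ (because $\lca_T(x',x)$ lies on the root-to-$x$ path, hence either above $v$ — in which case $\mathbb{P}_{(x',x)}$ contains $e$ and at least that $m$-edge — or at/below $v$), and all edges in that bottom segment are $m$- or $\otimes$-edges while all edges strictly above $v$ on the root-path are $\otimes$-edges; so the only non-$\otimes$ label that can appear on $\mathbb{P}_{(x',x)}$ is $m$, and since $\eps(x',x)\neq\otimes$ at least one such edge appears, forcing $\eps(x',x)=m$. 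This shows every incoming arc of $x$ is labeled $m$ or $\otimes$ and at least one is labeled $m$, i.e.\ $x\in X_m$, proving the identity; the same dichotomy on $\lca_T(x',x)$, applied to an arbitrary $x'$, simultaneously yields the final statement, since any $y\notin X_m$ with $\eps(y,x)\neq m$ would have to satisfy $\eps(y,x)=\otimes$, but then repeating the argument shows $y$'s incoming arcs are uniformly ``$m$ or $\otimes$'' only when $y\in X_m$ — more directly, the displayed dichotomy already gives $\eps(y,x)\in\{\otimes,m\}$ for \emph{all} $y$, and $\eps(y,x)=\otimes$ combined with $y\notin X_m$ is exactly what we must rule out; I would handle the $\otimes$-case $m=\otimes$ separately, noting that if $x\in X_\otimes$ then no incoming arc is labeled with any symbol of $M$, so $\eps(y,x)=\otimes$ trivially.

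Finally, the quasi-partition claim follows by counting: each $x\in X$ lies in $X_{\eps(z,x)}$ where $z$ is chosen so that $\eps(z,x)$ is any symbol actually appearing on an incoming arc of $x$ with the ``all others in $\{\otimes,s\}$'' property — but the identity just proved shows that if $x$ has an incoming $m$-arc at all then $x\in X_m$, and by the uniqueness of the topmost non-$\otimes$ color on the root-to-$x$ path (Lemma~\ref{lem:path-one-color}), $x$ can have incoming arcs of at most one color in $M$; if it has none, all incoming arcs are $\otimes$ and $x\in X_\otimes$. Hence the sets $X_1,\dots,X_{|M|},X_\otimes$ are pairwise disjoint and cover $X$, and since each $X_m$ with $m\in M$ is non-empty, at most one set (namely possibly $X_\otimes$) is empty. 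The main obstacle I anticipate is the careful case analysis on the position of $\lca_T(x',x)$ relative to the topmost $m$-edge $e$ — making sure the bottom segment $v$-to-$x$ and the strictly-above segment are correctly identified and that $e$ itself always lies on $\mathbb{P}_{(x',x)}$ when $\lca_T(x',x)$ is at or above $u$ — but this is exactly where Lemma~\ref{lem:path-one-color} does the work, so it should go through cleanly.
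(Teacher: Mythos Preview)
Your overall strategy---locate the topmost $m$-edge $e=(u,v)$ on the root-to-$x$ path and do a case split on the position of $\lca_T(x',x)$ relative to it---is sound and mirrors the paper's approach, which argues by contradiction via Lemma~\ref{lem:path-one-color} in essentially the same way. Your treatment of $X_m\neq\emptyset$, the set identity, and the quasi-partition is correct.

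There is, however, a genuine gap in your argument for the final claim $\eps(y,x)=m$ when $x\in X_m$, $y\in X\setminus X_m$, $m\in M$. Your dichotomy only yields $\eps(y,x)\in\{m,\otimes\}$, and you explicitly write that ``$\eps(y,x)=\otimes$ combined with $y\notin X_m$ is exactly what we must rule out''---but then you never rule it out. The sentence about ``repeating the argument'' to conclude that $y$'s incoming arcs are uniformly in $\{m,\otimes\}$ does not suffice: that alone does not place $y$ in $X_m$; you also need at least one incoming arc of $y$ to carry the label $m$.

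The fix is immediate with your own setup. If $\eps(y,x)=\otimes$, then by your case split $\lca_T(x,y)\preceq v$, so $y$ is a descendant of $v$ and the $m$-edge $e=(u,v)$ lies on the root-to-$y$ path. Since $v\neq\rho_T$ and $T$ is phylogenetic, there is a leaf $z$ that is not a descendant of $v$; then $\lca_T(y,z)\succeq u$, so $\mathbb{P}_{(z,y)}$ contains $e$ and hence $\eps(z,y)=m$. By the set identity you just established, $y\in X_m$, a contradiction. This is precisely the contradiction the paper derives (the paper does not single out the \emph{topmost} $m$-edge but instead compares the relative positions of $\lca(x,z)$ and $\lca(x,y)$ on the root-to-$x$ path; the content is the same).
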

\begin{proof}	
	To recap, $\eps$ is a map such that $\eps^{-1}(m)\neq \emptyset$ for all $m\in M$. 
	Thus, for each $m\in M$ there are two vertices $x,z\in X$ with $\eps(z,x)=m$. 
	Assume for contradiction that there is a vertex $z'\in X$ with $\eps(z',x)=m'\notin \{m,\otimes\}$.
	Thus, the path from $\lca(z,x)$  to $x$ contains an edge labeled $m$
	and the path from $\lca(z',x)$  to $x$ contains an edge labeled $m'$. 
	However, since both vertices $\lca(x,z)$ and $\lca(x,z')$ are located on the path from
	the root of $T$ to $x$, this path must
	have two edges, one with label $m$ and one with label $m'$; a contradiction
	to Lemma \ref{lem:path-one-color}. Thus, $\eps(z,x)\in \{m,\otimes\}$ for all $z\in X\setminus\{x\}$ and therefore, 
	$x\in X_m$. Thus,  $X_m\neq \emptyset$ for all $m\in M$.
	In particular,  the latter arguments imply that 
	whenever there  are vertices $x,z\in X$ with $\eps(z,x)=m$, 
	then for all vertices $z'\in X\setminus\{x\}$ we have 
	 $\eps(z',x)\in \{m,\otimes\}$ and thus, the sets  
	$X_m$ and $\{x\in X \mid \exists z\in X \text{ with } \eps(z,x) = m\}$
	are identical.

	\smallskip
	We continue to show that $X_1,X_2,\dots X_{|M|},X_{\otimes}$ form a quasi-partition of $X$.
	Clearly, for all distinct $m,m' \in M$ the sets $X_m, X_{m'} $ must be disjoint, 
	as otherwise $x\in X_m\cap X_{m'}$ would imply that 
	$\eps(z,x)=m$ for some $z\in X$ and at the same time 
	$\eps(z,x)\in \{m',\otimes\}$; a contradiction to $\eps$ being a map. 
	Moreover, for all distinct $m\in M$ the sets $X_m, X_{\otimes} $ must be disjoint,
	since $x\in X_{\otimes}$ if and only if $\eps(z,x)=\otimes$ for all $z\in X\setminus\{x\}$, 
	which is, if and only if  $x\notin X_m$ for all $m\in M$.
	
	It remains to show that the union of $X_1,X_2,\dots X_{|M|},X_{\otimes}$
	is $X$ and at most one of the sets is empty.  
	Note, for each $m\in M$ there are two vertices $z,x\in X$ with $\eps(z,x)=m$. 
	As argued above, $\eps(z,x)=m\in M$ implies $x\in X_m$.
 	Thus, none of the sets $X_1,X_2,\dots X_{|M|}$ is empty. 
	In particular,	$X_{\otimes} = \emptyset$  if and only if 
	for every $x\in X$ we have  $\eps(z,x)=m$ for some $z\in X$ and $m\in M$. 
	In this case, the union of the sets $X_1,X_2,\dots X_{|M|}$
	is $X$. 
	Now assume that $X_{\otimes} \neq \emptyset$ and $x\notin X_m, m\in M$. 
	Hence, 	$\eps(z,x)\neq m$ for all $z\in X\setminus\{x\}$ and all $m\in M$. 
	Thus, $\eps(z,x) = \otimes$ for all $z\in X\setminus\{x\}$, and therefore, 
	$x\in X_{\otimes}$. Thus, in case $X_{\otimes} \neq \emptyset$, 
	the union of the sets 	$X_1,X_2,\dots X_{|M|},X_{\otimes}$ is $X$.

	\smallskip
	To prove the last statement, let $x\in X_m$. 
	Clearly, if $m=\otimes$ and thus, $x\in X_{\otimes}$ then
	$\eps(y,x) = \otimes$ for all $y\in X\setminus\{x\}$.  
	Now, let $m\in M$ and $m'\in \Mo$ 
	with $m\neq m'$. Assume for contradiction  that $\eps(y,x) \neq m$
	for some $y\in X_{m'}$.
	Thus, the path from $\lca(x,y)$ to $x$ does not 
	contain an $m$-edge. By construction of $X_m$, 
	there is a vertex $z\in X$ with $\eps(z,x)=m$	and thus, 
	the path from $\lca(x,z)$ to $x$ contains an $m$-edge $e=uv$. 
	Trivially, all ancestors of $x$ are located on 
	the path from the root of $T$ to $x$ and thus, also $\lca(x,z)$ and $\lca(x,y)$. 
  Therefore, the $m$-edge is located between $\lca(x,z)$ and $\lca(x,y)$ and, 
	in particular,  
	$\lca(x,z)\succeq u\succ v\succeq \lca(x,y)$.
	Hence, $\lca(x,z) = \lca(y,z)$ and the path from 
	$\lca(y,z)$ to $z$ contains an $m$-edge; a contradiction
	to $y\in X_{m'}$.
 \end{proof}

\begin{figure}[tbp]
\begin{center}
  \includegraphics[width=\textwidth]{./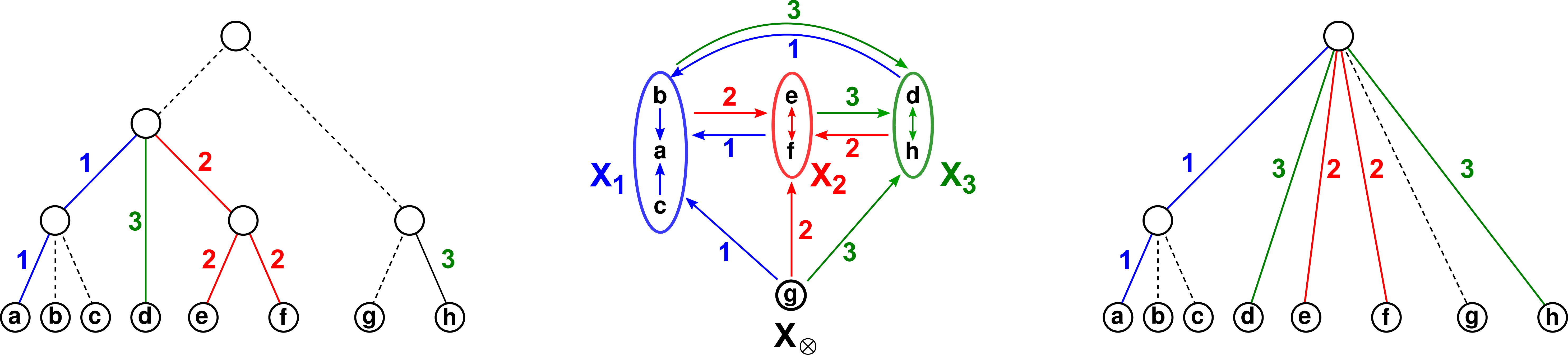}
\end{center}
\caption{Left, an edge-labeled tree $(T,\lambda)$ is shown where $\otimes$-edges are
				 drawn as dashed-lines. The tree $(T,\lambda)$ explains the graph $(K_{|X|},\eps)$ in 
					the middle, where $X= \{a,b,\dots,h\}$ and $\eps: \irr{X\times X} \to \{1,2,3,\otimes\}$.
						For better readability, $\otimes$-edges are omitted in the
						drawing of $(K_{|X|},\eps)$ and only one arc $xy$ with label 
						$\eps(x,y)=m'$ and $\eps(y,x)=m$ for each $x\in X_m$ and $y\in X_{m'}$, $m\neq m'$
					is drawn. All arcs between vertices $x,y\in X_m$ have label $m$. 	
					The least-resolved tree $(\Ts,\ls)$ constructed with Alg.\ \ref{alg:genfitch}
					is shown at the right-hand side.}
\label{fig:exmpl}
\end{figure}

For each $m\in M$, we denote with $G_m$ the subgraph of  $(K_{|X|},\eps)$ with vertex set 
$X_m$ as defined above and arc set \[E_m = \{xy \mid x,y\in X_m, \eps(xy)\neq \otimes\}.\]
Note, by definition of $X_m$, the graph $G_m$ contains only arcs $xy$ with $\eps(xy)=m$. 

Before we can derive the final result, we need one further definition. 
Let $(T,\lambda)$ be an edge-labeled phylogenetic tree on $X$. 
To recap,  the restriction $T_{|X_m}$ of $T$ to $X_m$ 
is obtained by suppressing all degree-$2$ vertices of $T(X_m)$.
For any edge $uv\in E(T_{|X_m})$,  let $S(u,v)$ denote the set
of all suppressed vertices on the path from $u$ to $v$ in $T(X_m)$.
We define the restriction $\lambda_{|X_m}$ to 
$X_m$ by putting for all edges
$uv\in E(T_{|X_m})$: 
	\[\lambda_{|X_m}(x,y) = \begin{cases}
			\lambda(u,v) & \, \text{, if } S(u,v)=\emptyset \text{ and thus, } uv\in E(T) \\
			 m & \, \text{, else if } \text{there are } a,b\in S(u,v)\cup\{u,v\} \text{ with } \lambda(a,b)=m\\
			 \otimes & \, \text{, else.}
	\end{cases}\]
Lemma \ref{lem:path-one-color} implies that the restriction $\lambda_{|X_m}$ of $\lambda$
is well defined. In particular, $\lambda_{|X_m}(u,v) = m$ if and only if the corresponding unique path
between $u$ and $v$ in $T$ contains an $m$-edge.

We are now in the position to characterize tree-like maps $\eps$.

\begin{theorem}
\label{thm:main}
The map $\eps:\irr{X\times X} \to \Mo$ is tree-like 
(or equivalently $(K_{|X|},\eps)$ is a generalized Fitch graph)
if and only if 
the following four conditions are satisfied:
\begin{description}
\item[(T1)] The sets $X_1,X_2,\dots X_{|M|},X_{\otimes}$ form a quasi-partition of $X$.
\item[(T2)] $G_m  = (X_m,E_m)$ is a simple Fitch graph for all $m\in M$. 
\item[(T3)] For all $m\in M$ and $x\in X_m$, $y\in X\setminus X_{m}$
						it holds that $\eps(y,x) = m$. 
\item[(T4)] For all $x\in X_{\otimes}$ and $y\in X\setminus\{x\}$ 	
						it holds that $\eps(y,x) = \otimes$.
\end{description}
In particular, the tree $(\Ts,\ls)$ returned by Algorithm \ref{alg:genfitch}
(with input $\eps$) explains $\eps$, whenever $\eps$ is tree-like.  

\end{theorem}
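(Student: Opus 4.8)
The plan is to prove the equivalence in two directions, and then verify that Algorithm~\ref{alg:genfitch} actually produces a tree explaining $\eps$ whenever the four conditions hold.

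\textbf{The ``only if'' direction.} Suppose $\eps$ is tree-like and fix a tree $(T,\lambda)$ explaining $\eps$. Then (T1) is exactly the quasi-partition statement of Lemma~\ref{lem:basics}, (T3) is the last statement of Lemma~\ref{lem:basics}, and (T4) follows immediately from (T3) applied with $m=\otimes$ together with the definition of $X_\otimes$ (every incoming arc at $x\in X_\otimes$ is labeled $\otimes$). For (T2), I would argue that $(T_{|X_m},\lambda_{|X_m})$ is an $\{m,\otimes\}$-edge-labeled tree, which after identifying the symbol $m$ with ``$1$'' is an $\{1,\otimes\}$-edge-labeled tree, and that it explains $G_m$ in the sense of a simple Fitch graph. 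The key point here is the final remark before the theorem: $\lambda_{|X_m}(u,v)=m$ iff the unique $u$--$v$ path in $T$ contains an $m$-edge. From this and Lemma~\ref{lem:path-one-color} (a path from the root containing an $m$-edge has all its edges labeled $m$ or $\otimes$, so no ``other'' color $m'$ can interfere), one deduces that for $x,y\in X_m$ the path $\mathbb{P}_{(x,y)}$ in $T_{|X_m}$ carries an $m$-edge iff $\mathbb{P}_{(x,y)}$ in $T$ carries an $m$-edge iff $\eps(x,y)=m$ iff $xy\in E_m$. Hence $G_m$ is a simple Fitch graph. One subtlety worth care: $\lca_{T_{|X_m}}(x,y)$ need not equal $\lca_T(x,y)$ as a vertex of $T$, but it does lie on the $\lca_T(x,y)$--to--suppression between them, and restricting to leaves in $X_m$ preserves which of the two sub-paths below $\lca(x,y)$ a given $m$-edge sits on; this is the place to be slightly careful.

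\textbf{The ``if'' direction.} Assume (T1)--(T4). By (T2) and Theorem~\ref{thm:1-Fitch}, each $G_m$ is explained by a least-resolved $\{1,\otimes\}$-tree $(T_m,\lambda_m)$ on leaf set $X_m$; relabel its $1$-edges by the symbol $m$. Algorithm~\ref{alg:genfitch} presumably assembles these trees: create a fresh root $\rho$, attach each $T_m$ ($m\in M$) below $\rho$ by an edge, and attach the leaves of $X_\otimes$ directly below $\rho$ as well (by $\otimes$-edges); the edges from $\rho$ to each $T_m$-root should be $\otimes$-edges (any $m$-signal already lives inside $T_m$). Call the result $(\Ts,\ls)$. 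I would then verify that $(\Ts,\ls)$ explains $\eps$ by checking all ordered pairs $(x,y)$ by cases on where $x,y$ land. If $x,y\in X_m$: the path $\mathbb{P}_{(x,y)}$ in $\Ts$ stays inside $T_m$ (since $\lca(x,y)$ is inside $T_m$), so it carries an $m$-edge iff $xy\in E_m$ iff $\eps(x,y)\in\{m,\otimes\}$ is $=m$ — and by the note after $X_m$'s definition $\eps(x,y),\eps(y,x)\in\{m,\otimes\}$, so this is exactly right; when no $m$-edge is on the path, $\eps(x,y)=\otimes$, matching the ``$\otimes$'' clause of the definition. If $x\in X_m$, $y\notin X_m$ (so $\lca(x,y)=\rho$): the path $\mathbb{P}_{(y,x)}$ from $\rho$ down to $x$ enters $T_m$ and inside $T_m$ the path from $T_m$'s root to leaf $x$ must contain an $m$-edge (because every arc $zx$ with $z\in X_m$, and in particular any arc into $x$, is an $m$-edge in $G_m$ precisely when the root-to-$x$ path in $T_m$ has an $m$-edge — and $x\in X_m$ means $x$ has an incoming $m$-arc from \emph{some} $z$, which in the least-resolved $T_m$ forces an $m$-edge above $x$; here one uses the least-resolved structure of Theorem~\ref{thm:1-Fitch}). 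Hence $\eps(y,x)=m$, matching (T3). Conversely $\mathbb{P}_{(x,y)}$ goes from $\rho$ to $y$; if $y\in X_{m'}$ with $m'\neq m$ it carries an $m'$-edge (same argument), giving $\eps(x,y)=m'$; if $y\in X_\otimes$ the path has only $\otimes$-edges, giving $\eps(x,y)=\otimes$; both agree with (T3)/(T4). The remaining case $x,y\in X_\otimes$ is immediate: the path is all $\otimes$, and $\eps(x,y)=\otimes$ by (T4).

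\textbf{The main obstacle.} The delicate part is the cross-case bookkeeping: verifying that for $x\in X_m$ the root-to-$x$ path \emph{inside} $T_m$ really does contain an $m$-edge (equivalently, that $x$ is not a ``full-$\otimes$'' leaf of $G_m$), and symmetrically that when the internal path from $T_m$'s root to $x$ has no $m$-edge then the only incoming label at $x$ within $X_m$ is $\otimes$ — this is where one must lean on the characterization in Theorem~\ref{thm:1-Fitch} (all inner edges of the least-resolved tree are $1$-edges, with a pendant $\otimes$-leaf below each) and on the precise definition of $X_m$. A secondary nuisance is that $G_m$ could have a single vertex or could be arc-empty; I would check that Theorem~\ref{thm:1-Fitch} still yields a (trivial) explaining tree in those degenerate cases so the assembly step is well-defined. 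Once these points are pinned down, the pair-by-pair verification that $(\Ts,\ls)$ explains $\eps$, hence that $\eps$ is tree-like, is routine, and the ``in particular'' clause of the theorem falls out of the same argument.
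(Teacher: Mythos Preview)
Your ``only if'' direction is essentially the paper's argument: invoke Lemma~\ref{lem:basics} for (T1), (T3), (T4) and use the restriction $(T_{|X_m},\lambda_{|X_m})$ for (T2). That part is fine.

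The ``if'' direction has a genuine gap. You guess that Algorithm~\ref{alg:genfitch} connects each $T_m$-root to the global root $\rho$ by an $\otimes$-edge, reasoning that ``any $m$-signal already lives inside $T_m$''. This is false, and the tree you describe does not explain $\eps$ in general. Take $X=\{a,b,c\}$, $M=\{1\}$, with $\eps(a,b)=\eps(a,c)=\eps(b,c)=1$ and $\eps(c,b)=\eps(b,a)=\eps(c,a)=\otimes$. Then $X_1=\{b,c\}$, $X_\otimes=\{a\}$, and $G_1$ has the single arc $bc$. The least-resolved $(T_1,\lambda_1)$ is the cherry on $\{b,c\}$ with $\lambda_1(\rho_1 b)=\otimes$ and $\lambda_1(\rho_1 c)=1$. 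In your assembled tree the path from $\rho$ to $b$ is $\rho\to\rho_1\to b$ with labels $\otimes,\otimes$, so your tree gives $\eps(a,b)=\otimes$, contradicting $\eps(a,b)=1$.

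The flaw in your cross-case verification is the sentence ``$x\in X_m$ means $x$ has an incoming $m$-arc from some $z$, which \dots forces an $m$-edge above $x$'' in $T_m$. The witness $z$ with $\eps(z,x)=m$ need not lie in $X_m$; if every vertex of $X_m\setminus\{x\}$ sends $\otimes$ into $x$ (as happens to $b$ above), then $x$ has in-degree zero in $G_m$ and the root-to-$x$ path in $T_m$ is entirely $\otimes$. You correctly flagged exactly this as ``the main obstacle'' and then built the one tree that cannot handle it.

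The paper's Algorithm~\ref{alg:genfitch} resolves this with a case distinction: if $|X_m|=1$ or the root of $T_m$ is incident to an $\otimes$-edge, it inserts an edge $\rho_{\Ts}r_m$ with label $m$ (not $\otimes$) above $T_m$; otherwise, by Theorem~\ref{thm:1-Fitch} every edge at the root of $T_m$ is already an $m$-edge, and the root of $T_m$ is identified with $\rho_{\Ts}$ directly. Either way every path from $\rho_{\Ts}$ down into $X_m$ is guaranteed to contain an $m$-edge, which is exactly what (T3) requires. Once you use this construction, your pair-by-pair verification goes through.
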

\begin{proof}
	We first establish the `if' direction. Assume Conditions \emph{(T1)} to \emph{(T4)}
	are satisfied for $\eps$. Since $G_m = (X_m,E_m)$ is a simple Fitch graph for
	all $m\in M$, all $G_m$ are explained by a 
	tree	$(T_m,\lambda_m)$ with leaf set $X_m$. 
	
	We show that the tree $(\Ts,\ls)$ constructed with Alg.\ \ref{alg:genfitch} explains $\eps$.
	By construction of  $(\Ts,\ls)$ all trees $(T_m,\lambda_m)$ are exactly the subtrees $\Ts(X_m)$ where 
	all edge labels $\lambda_m$ are kept. 
	Hence, $G_m$ is explained by $(\Ts(X_m),\lambda^*_{|X_M})$. 
			Since $\eps(x,y) = m$ (resp.\ $\eps(x,y) = \otimes$) for any $x,y\in X_m$
	if and only if $xy\in E_m$ (resp.\ $xy\not\in E_m$), we can conclude that 
	all pairs $(x,y), (y,x)$ with $x,y\in X_m$ are explained by $(\Ts,\ls)$ for all $m\in M$. 
	Moreover, each $x\in X_{\otimes}$ 	
	is linked to the root $\rho_{\Ts}$ via an $\otimes$-edge (Line \ref{alg:x4} of Alg.\ \ref{alg:genfitch}). 
	Hence, for each two vertices $x,y\in X_{\otimes}$ we have, by definition of $X_{\otimes}$, 
	$\eps(x,y) = \eps(y,x) = \otimes$, which is trivially explained by $(\Ts,\ls)$. 
	Since the sets $X_1,X_2,\dots X_{|M|},X_{\otimes}$ 
	form a quasi-partition of $X$, it is ensured that there are  no overlapping leaf sets when
	the trees $(T_m,\lambda_m)$ and the elements $x\in X_{\otimes}$ 
	have been added to $(\Ts,\ls)$ and that the leaf set of $\Ts$ is $X$. 

	We continue to show that all pairs $(x,y)$ with $x,y\in X$ that satisfy \emph{(T3)} and \emph{(T4)} are
	explained by $(\Ts,\ls)$. 
	Note first, by construction of  $(\Ts,\ls)$ and since $(\Ts,\ls)$	
	explains $G_m$ for all $m\in M$, all edges along the 
	path from $\rho_{\Ts}$ to $x\in X_m$ have label $m$ or $\otimes$.
	Even more, we show that each path $P_{\rho_{\Ts},x}$ from $\rho_{\Ts}$ to each $x\in X_m$, $m\in M$
	has always an edge with label $m$. 
 	By construction of $(\Ts,\ls)$ (Alg.\ \ref{alg:genfitch}, Line \ref{alg:x1}-\ref{alg:x3}), if
	$|X_m|=1$ or there is a leaf  $x\in X_m$ adjacent to the root $\rho_m$ of $T_m$
	such that $\lambda_m(\rho_m,x)=\otimes$, then the tree $(T_m,\lambda_m)$
	is placed below the particular $m$-edge $\rho_{\Ts}r_m$. Hence, all paths from
 	$\rho_{\Ts}$ to $x\in X_m$ contain this $m$-edge. Since \emph{(T3)} and \emph{(T4)} state that 
	$\eps(y,x)=m$ for all $x\in X_m$, $m\in M$, all pairs $(y,x)$ with
	$x\in X_m$, $y\in X\setminus X_m$ are explained by $(\Ts,\ls)$, given that 
	$(T_m,\lambda_m)$ satisfies the Conditions in Alg.\ \ref{alg:genfitch} (Line \ref{alg:x1}). 
	Assume that $(T_m,\lambda_m)$ does not satisfy the latter conditions. 
 	Theorem \ref{thm:1-Fitch}	
	implies that all inner edges of $(T_m,\lambda_m)$ are $m$-edges and thus, any $\otimes$-edge in 
	$(T_m,\lambda_m)$ must be incident to some leaf $x\in X_m$. 
  Since $(T_m,\lambda_m)$ does not satisfy the if-condition in Line \ref{alg:x1} of  Alg.\ \ref{alg:genfitch}, 
	\emph{all} edges that are incident to the root of $(T_m,\lambda_m)$ 
	have label $m$. Hence, all paths from
 	$\rho_{\Ts}$ to $x\in X_m$ contain  an $m$-edge and, 
	therefore, all pairs $(y,x)$ with
	$x\in X_m$, $y\in X\setminus X_m$ are explained by $(\Ts,\ls)$.  
	Finally, if $x\in X_{\otimes}$, then \emph{(T4)} claims $\eps(y,x)=\otimes$ for all $y\neq x$
	which is trivially explained by $(\Ts,\ls)$, since 
	$x$ is linked to the root $\rho_{\Ts}$ via an $\otimes$-edge
	(Alg.\ \ref{alg:genfitch}, Line \ref{alg:x4}). 
	In summary, if the Conditions \emph{(T1)} to \emph{(T4)}
	are satisfied, then $\eps$ is explained by $(\Ts,\ls)$ and therefore, 
	tree-like.  
	This establishes the `if' direction.

	We turn now to the `only if' direction. Assume that $\eps$ is tree-like 
	and let $(T,\lambda)$ be a tree that explains $\eps$ with root $\rho_T$. 
	Lemma \ref{lem:basics} implies Condition \emph{(T1)}, \emph{(T3)} and \emph{(T4)}.
		We continue to show \emph{(T2)}. To this end,   	 
	consider the graph $G_m = (X_m,E_m)$, $m\in M$. 
	Since $(T,\lambda)$ explains $\eps$  and therefore, also $(K_{|X|},\eps)$, it must 
	explain each of its induced subgraphs and thus, 
	any pair $(x,y)$ with $x,y\in X_m$ and $m\in M$ is explained 
	by $(T,\lambda)$.  
	By construction of the restriction $(T_{|X_m},\lambda_{|X_m})$
	of $(T,\lambda)$ to $X_m$ we have 
	$\eps(x,y)=m$ if and only if
	the path in $(T, \lambda )$ from $\lca_T(x,y)$ to $y$ contains an 
	$m$-edge which is  
	if and only if  there is an
	$m$-edge on the path from $\lca_{T_{|X_m}}(x, y)$ to the leaf $y$ in $(T_{|X_m}, \lambda_{|X_m})$.  
	Hence,  $(T_{|X_m}, \lambda_{|X_m})$ explains $(K_{|X|}[X_M],\eps)$. 
	By definition of $X_m$, 
  the graph $G_m$ contains only arcs $xy$ with $\eps(x,y)=m$
	and  for all $x,y\in X_m$ with $xy\notin E_m$
	we have $\eps(x,y)=\otimes$. 
	Thus, $G_m$ is obtained from $(K_{|X|}[X_M],\eps)$ by removing
	all $\otimes$-edges and is, therefore, explained by 
	 $(T_{|X_m}, \lambda_{|X_m})$. Hence, $G_m$ is a simple Fitch graph and \emph{(T2)}
	is satisfied. 	This establishes the `only if' direction.

	Thus, Conditions \emph{(T1)} to \emph{(T4)} characterize 
	 tree-like maps $\eps$. This together with the 
	proof of the `if' direction implies 
	the correctness of Alg.\ \ref{alg:genfitch}. 
 \end{proof}

\begin{algorithm}[tbp]
\caption{\texttt{Recognition of tree-like maps $\eps$.}} 
\label{alg:recfitch}
\begin{algorithmic}[1] 
  \Require A map $\eps:\irr{X\times X} \to \Mo$;  
  \Ensure A least-resolved edge-labeled tree $(\Ts,\ls)$ that explains $\eps$ or the statement ``The map $\eps$ is not tree-like'';
	\If{$|M|>2|X|-2$}  			 Output: ``The map $\eps$ is not tree-like''; \label{alg:Msize}
	\ElsIf{ $\eps$ satisfies Condition \emph{(T1)} to \emph{(T4)} in Thm.\ \ref{thm:main} }\label{alg:y1}
	\State Compute $(\Ts,\ls)$ with Alg.\ \ref{alg:genfitch};
			\Else\  
			 Output: ``The map $\eps$ is not tree-like'';
     \EndIf
\end{algorithmic}
\end{algorithm}

\begin{algorithm}[tbp]
\caption{\texttt{Compute Least-Resolved Tree $(\Ts,\ls)$ for $\eps$.}} 
\label{alg:genfitch}
\begin{algorithmic}[1] 
  \Require A tree-like map $\eps:\irr{X\times X} \to \Mo$; 
  \Ensure A least-resolved edge-labeled tree $(\Ts,\ls)$ that explains $\eps$;
	\State Add a root $\rho_{\Ts}$ to $\Ts$;
  \For{all $m\in M$}
	\State	Compute the least-resolved tree $(T_m,\lambda_m)$ the explains $G_m=(X_m,E_m)$; 	
     \If{$|X_m|=1$ \textit{OR} $(T_m,\lambda_m)$ contains an $\otimes$-edge incident to its root}  \label{alg:x1}
	\State		Add a vertex $r_m$ and the edge $\rho_{\Ts} r_m$ with label $m$; \label{alg:x2}
	\State		Add $(T_m,\lambda_m)$ by identifying the root of $T_m$ with $r_m$; \label{alg:x3}
  \State    Set $\lambda^*(e) = \lambda_m(e)$ for all edges in $T_m$;	
		\Else\ 
			 Identify the root of $T_m$ with $\rho_{\Ts}$ and add	$(T_m,\lambda_m)$;
     \EndIf
  \EndFor
	\State Add an edge  $e = \rho_{\Ts} x$ with label $\ls(e)=\otimes$ for all $x\in X_{\otimes}$; \label{alg:x4}
	\State Return $(\Ts,\ls)$;
	\end{algorithmic}
\end{algorithm}

\begin{theorem}
	For a given map $\eps:\irr{X\times X} \to \Mo$, 
	Algorithm \ref{alg:recfitch} determines whether 
	$\eps$ is tree-like or not, and returns a tree $(\Ts,\ls)$
	that explains a tree-like map $\eps$ in  $O(|X|^2)$-time.
	
	In particular, if $\eps$ is tree-like, then $(\Ts,\ls)$
  is a least-resolved tree for $\eps$.
\end{theorem}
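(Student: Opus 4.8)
The plan is to establish three things separately: correctness of the recognition decision, correctness of the tree output, and the $O(|X|^2)$ running time, and then to verify the least-resolved claim.

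\medskip

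\noindent\textbf{Correctness of the decision and the tree.} Theorem~\ref{thm:main} already does almost all of the work here: it states that $\eps$ is tree-like if and only if Conditions \emph{(T1)}--\emph{(T4)} hold, and that in the positive case the tree $(\Ts,\ls)$ produced by Algorithm~\ref{alg:genfitch} explains $\eps$. So for the decision part I only need to argue that the extra early-exit test in Line~\ref{alg:Msize} of Algorithm~\ref{alg:recfitch} (outputting ``not tree-like'' when $|M|>2|X|-2$) is sound, i.e.\ that it never rejects a tree-like map. This follows from the structure of any tree $(T,\lambda)$ explaining $\eps$: by the standing assumption each $m\in M$ is realized, so by Lemma~\ref{lem:basics} each $X_m\neq\emptyset$ and the path from $\rho_T$ to a leaf of $X_m$ carries an $m$-edge; distinct symbols force distinct edges (Lemma~\ref{lem:path-one-color} again), so $|M|\le |E(T)|$. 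A phylogenetic tree on $|X|$ leaves has at most $2|X|-2$ edges, hence $|M|\le 2|X|-2$, and the early exit is justified. Given that, Algorithm~\ref{alg:recfitch} is just ``test (T1)--(T4), then run Algorithm~\ref{alg:genfitch},'' and its correctness is immediate from Theorem~\ref{thm:main}.

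\medskip

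\noindent\textbf{Running time.} This is where the real work lies. First, checking $|M|>2|X|-2$ is $O(1)$ once $|M|$ and $|X|$ are known (and if $|M|>2|X|-2$ we stop, so in the sequel $|M|=O(|X|)$). Next I would verify (T1)--(T4) in $O(|X|^2)$: one pass over the $\Theta(|X|^2)$ arcs computes, for every vertex $x$, the set of distinct labels appearing on incoming arcs; from this, membership of $x$ in each $X_s$ is read off directly, giving the quasi-partition and letting us check (T1), and the same incoming-label data lets us check (T3) and (T4) within the same pass. The subtle point is checking (T2): we must verify that each $G_m=(X_m,E_m)$ is a simple Fitch graph. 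By Theorem~\ref{thm:1-Fitch} each such test costs $O(|X_m|+|E_m|)=O(|X_m|^2)$, and since the $X_m$ are pairwise disjoint, $\sum_m |X_m|^2 \le \bigl(\sum_m |X_m|\bigr)^2 \le |X|^2$, so the total is $O(|X|^2)$. Finally, Algorithm~\ref{alg:genfitch} builds $(\Ts,\ls)$: the reconstruction of each least-resolved $(T_m,\lambda_m)$ from $G_m$ is again $O(|X_m|+|E_m|)$ by Theorem~\ref{thm:1-Fitch}, and assembling the pieces (adding the $m$-edges $\rho_{\Ts}r_m$, grafting the $T_m$, attaching the leaves of $X_{\otimes}$) costs $O\bigl(\sum_m(|X_m|+|E_m|)\bigr)+O(|X_{\otimes}|)+O(|M|)=O(|X|^2)$. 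Summing all phases gives the claimed bound.

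\medskip

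\noindent\textbf{Least-resolvedness.} For the final sentence I would argue that $(\Ts,\ls)$ cannot be strictly refined-away-from, i.e.\ no edge can be contracted (with a suitable relabeling) while still explaining $\eps$. The argument has two layers matching the two layers of the construction. Inside each grafted block $T_m$, Algorithm~\ref{alg:genfitch} uses the least-resolved tree for the simple Fitch graph $G_m$, whose inner edges are all $m$-edges, each inner edge carrying an outer $\otimes$-edge witness (Theorem~\ref{thm:1-Fitch}); contracting such an edge would destroy an arc of $G_m$, hence change $\eps$. For the edges incident to $\rho_{\Ts}$: the $m$-edges $\rho_{\Ts}r_m$ are only inserted when $|X_m|=1$ or $(T_m,\lambda_m)$ has an $\otimes$-edge at its root, and in either case there is a leaf $x\in X_m$ whose path from $\rho_{\Ts}$ would lose its unique $m$-edge upon contraction, so some pair $(y,x)$ with $y\notin X_m$ would cease to be explained (recall (T3) forces $\eps(y,x)=m$); the $\otimes$-edges $\rho_{\Ts}x$ for $x\in X_{\otimes}$ cannot be contracted without merging $x$ into a block, again changing incoming labels at $x$. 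The main obstacle I anticipate is making the edge-counting in the running-time argument fully rigorous — in particular handling the case $X_{\otimes}=\emptyset$ versus $X_{\otimes}\neq\emptyset$ cleanly, and being careful that the ``$G_m$ is a simple Fitch graph'' test is applied to the correct induced subgraph and that its cost is charged against $|X_m|^2$ rather than $|X|^2$ — together with double-checking that the $|M|=O(|X|)$ reduction really is available before the (T2) tests are run, so that the number of calls to the linear-time Fitch recognizer is itself $O(|X|)$ and does not introduce a hidden factor.
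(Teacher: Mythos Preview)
Your proposal is correct and follows essentially the same route as the paper: Theorem~\ref{thm:main} for the decision and the output tree, the edge-count bound $|E(T)|\le 2|X|-2$ for Line~\ref{alg:Msize}, a per-$m$ analysis via Theorem~\ref{thm:1-Fitch} for the cost of checking \emph{(T2)} (the paper sums $O(|X|+|E_m|)$ over $m$ and uses $|M|\in O(|X|)$, whereas you use $\sum_m |X_m|^2\le |X|^2$; both work), and the two-layer least-resolved argument distinguishing edges inside each $T_m$ from the edges $\rho_{\Ts}r_m$. The one place to tighten is the contraction of $\rho_{\Ts}r_m$: since least-resolvedness allows an arbitrary relabeling $\lambda'$ after contraction, you must also exclude relabeling the surviving edge $\rho_{\Ts}x$ from $\otimes$ to $m$; the paper handles this by observing that $\lambda_m$ is forced on the edges of $T_m$ (equivalently, such a relabel would flip $\eps(x',x)$ from $\otimes$ to $m$ for any sibling leaf $x'\in X_m$).
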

\begin{proof}
To establish the correctness of Alg.\  \ref{alg:recfitch}, 
note first that for any tree $T=(V,E)$ on $X$ we have
$|E|+1=|V| \leq 2|X|-1$ (cf.\ \cite[Lemma 1]{Hellmuth:15a}). 
Thus, there is no tree with $|E| > 2|X|-2$ edges and hence, 
one can place at most $2|X|-1$ different symbols on the
edges of a tree. Therefore, 
if $|M|>2|X|-2$, then $\eps$ cannot be tree-like, 
since we claimed that for any $m\in M$, 
$\eps^{-1}(m)\neq \emptyset$. This establishes
the correctness of Line \ref{alg:Msize} of Alg.\ \ref{alg:recfitch}. 
Now, apply Thm.\ \ref{thm:main} to conclude that 
Alg.\ \ref{alg:recfitch} is correct. 

We continue to verify the runtime of Alg.\  \ref{alg:recfitch}. 
Clearly, the sets $X_1,\dots,X_{|M|}, X_{\otimes}$ can be constructed
by stepwisely considering each pair $(x,y)\in \irr{X\times X}$ and 
its label $\eps(x,y)$, which takes  $O(|X|^2)$-time. 
In particular, verifying Condition \emph{(T1)} 
can be done directly within the construction phase of the sets 
$X_m$, $m\in \Mo$ and, hence stays within the time complexity 
of $O(|X|^2)$. 
Thm.\ \ref{thm:1-Fitch} implies that Condition \emph{(T2)}
can be verified in $O(|X|+|E_m|)$ time for each $m\in M$. 
Due to the `if-condition' in Line \ref{alg:Msize} of Alg.\ \ref{alg:recfitch}, 
we have $|M|\in O(|X|)$. 
Furthermore, $\sum_{m\in M} E_m \in O(|X|^2)$. 
Thus, Condition  \emph{(T2)} can be checked in
$\sum_{m\in M} O(|X|+|E_m|) =  O(|M||X|) +|X|^2) = O(|X|^2)$ time. 
Finally, for \emph{(T3)} and  \emph{(T4)} we need to check if for
all $x\in X_m$ and $y\in X\setminus X_m$ it holds that $\eps(y,x) = m$.
In other words, we must check for all $x\in X$ which label its
$|X|-1$ incoming  arcs $zx$ have.  
This can be done in $O(|X|^2)$-time. Thus, we end in overall
time-complexity of $O(|X|^2)$ for Alg.\ \ref{alg:recfitch}.

We continue to show that $(\Ts,\ls)$ is a least-resolved tree for $\eps$. 
By construction of  $(\Ts,\ls)$ all trees $(T_m,\lambda_m)$ 
are exactly the subtrees $\Ts(X_m)$ where all edge labels $\lambda_m$ are kept.
Hence, $(T_m,\lambda_m) = (\Ts(X_m),\lambda^*_{|X_M})$.
Note that none of the edges can be contracted that are contained in any of the
trees $(T_m,\lambda_m)$ that explains $G_m$  
and thus,  that explains also
any pair $(x,y)$ with $x,y\in X_m$, 
since $(T_m,\lambda_m)$ is already the unique 
least-resolved for the map $\eps$ restricted to pairs $(x,y)$ with 
$x,y\in X_m$ (cf.\ Thm.\ \ref{thm:1-Fitch}). 
In particular, Thm.\ \ref{thm:1-Fitch} implies that
the labeling $\lambda_m$ is unique and can therefore, 
not be changed. 
Moreover, no outer-edge of $(\Ts,\ls)$ can be contracted, 
otherwise we would loose the information of a leaf. 
Hence, the only remaining edges that might be contracted are the
$m$-edges of the form  $\rho_{\Ts}r_m$ as constructed in Line \ref{alg:x2} of Alg.\ \ref{alg:genfitch}. 
However, such an edge $\rho_{\Ts}r_m$ was only added if 
$(T_m,\lambda_m)$ contains an outer $\otimes$-edge $r_m x$ where $x\in X_m$  and 
 $r_m$ denotes the root  of $T_m$. Thus, contracting the edge $\rho_{\Ts}r_m$
would yield $\rho_{\Ts}=r_m$. Now, there are two possibilities, either 
we relabel the resulting edge $\rho_{\Ts}x$ or we keep the label $\otimes$. 
However, relabeling of $\rho_{\Ts}x$  is not possible,
since  $\lambda_m$ is unique and can therefore, not be changed. 
Thus, $\rho_{\Ts}x$ must remain an $\otimes$-edge. 
However, due to the definition of $X_m$
there is a pair $(z,x)$ with $\eps(z,x)=m$ which cannot be explained
by any tree where $x$ is linked to the root $\rho_{\Ts}$ via an $\otimes$-edge;
a contradiction. Hence, $m$-edges of the form  $\rho_{\Ts}r_m$
cannot be contracted.  In summary, 
there is no tree $(T',\lambda')$
that explains $\eps$, where $T'$ is obtained from $T$ by contracting
an arbitrary edge. Hence, $(\Ts,\ls)$ is least-resolved for $\eps$. 
 \end{proof}

For maps $\eps:\irr{X\times X} \to M$ that assign to 
none of the elements $(x,y)$  a label $\otimes$ we obtain 
the following result.
\begin{corollary}
The map $\eps:\irr{X\times X} \to M$ is tree-like if and only if 
Condition \emph{(T1)} and  \emph{(T3)} are satisfied.  
\end{corollary}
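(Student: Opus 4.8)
The plan is to deduce the corollary directly from Theorem~\ref{thm:main} by observing that, for a map $\eps\colon\irr{X\times X}\to M$ whose image avoids the symbol $\otimes$, Conditions \emph{(T2)} and \emph{(T4)} hold \emph{automatically}, so that the four-part characterization of Theorem~\ref{thm:main} reduces to \emph{(T1)} and \emph{(T3)}. Granting this, the ``only if'' direction is immediate: if $\eps$ is tree-like then Theorem~\ref{thm:main} (indeed already Lemma~\ref{lem:basics}) yields all four conditions, in particular \emph{(T1)} and \emph{(T3)}. For the ``if'' direction, assuming \emph{(T1)} and \emph{(T3)} together with the two conditions that come for free gives all of \emph{(T1)}--\emph{(T4)}, whence $\eps$ is tree-like by Theorem~\ref{thm:main}.

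First I would show that $X_{\otimes}=\emptyset$. Since $|X|>1$, every $x\in X$ has at least one incoming arc $zx$ with $z\neq x$, and by hypothesis $\eps(z,x)\in M$; hence $x\notin X_{\otimes}$ by the definition of $X_{\otimes}$. As there is no $x\in X_{\otimes}$, Condition \emph{(T4)} is vacuously satisfied. (Note also that, once \emph{(T1)} is assumed, the empty class permitted in the quasi-partition is exactly $X_{\otimes}$, so $X_1,\dots,X_{|M|}$ are then a genuine partition of $X$ into non-empty classes.)

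Next I would verify Condition \emph{(T2)}. Fix $m\in M$ and $x,y\in X_m$. By the remark following the definition of the sets $X_s$ we have $\eps(x,y),\eps(y,x)\in\{m,\otimes\}$, and since $\otimes$ does not lie in the image of $\eps$ this forces $\eps(x,y)=\eps(y,x)=m$. Therefore $E_m=\{xy\mid x,y\in X_m\}$ and $G_m=(X_m,E_m)$ is the complete di-graph on $X_m$ (a single isolated vertex if $|X_m|=1$). Such a graph is explained by the phylogenetic tree whose root is adjacent to all leaves in $X_m$ with every edge carrying the label $1$ (and trivially when $|X_m|\le 1$), so $G_m$ is a simple Fitch graph and \emph{(T2)} holds. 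Feeding \emph{(T2)} and \emph{(T4)} into Theorem~\ref{thm:main} then completes both directions.

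There is essentially no substantial obstacle here; the proof is a short specialization of Theorem~\ref{thm:main}. The only points that require (minor) attention are the degenerate case $|X_m|=1$ when checking \emph{(T2)}, and the fact that $X_{\otimes}=\emptyset$ relies on the standing assumption $|X|>1$ — both routine.
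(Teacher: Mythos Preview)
Your proposal is correct and follows essentially the same route as the paper: reduce to Theorem~\ref{thm:main} by showing that \emph{(T2)} and \emph{(T4)} are automatic when the image of $\eps$ avoids $\otimes$. The only cosmetic difference is that the paper verifies \emph{(T2)} via the forbidden-subgraph characterization of Theorem~\ref{thm:1-Fitch} (the complete di-graph contains none of $F_1,\dots,F_8$), whereas you exhibit an explicit star tree with all edges labeled $1$; and you make the vacuity of \emph{(T4)} explicit via $X_{\otimes}=\emptyset$, which the paper leaves implicit.
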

\begin{proof}
 By Thm.\ \ref{thm:main}, \emph{(T1)} and \emph{(T3)} are satisfied if 
 $\eps$ is tree-like. 
 Assume that 	\emph{(T1)} and \emph{(T3)} are satisfied for $\eps$. 
 By construction of $X_m$, for all $x,y\in X_m$ we have
 $\eps(x,y) = \eps(y,x) = m$. Therefore, 
 $K_{|X|}[X_m] = G_m$ is a complete di-graph with vertex set $X_m$. 
 Hence, $G_m$ does not contain any of the forbidden 
	subgraphs $F_1,\dots,F_8$ (cf.\ Fig.\ \ref{fig:forb}).  
	Therefore, $G_m$ is a simple Fitch graph and \emph{(T2)} is always satisfied. 
	Now, apply Thm.\ \ref{thm:main} to conclude that 
	$\eps$ is tree-like. 
 \end{proof}

\subsection{Uniqueness of the Least-Resolved Tree}

In general, there may be more than one rooted (phylogenetic) tree that explains
a given map $\eps$, see Fig.\ \ref{fig:exmpl}. In particular, if $\eps$ is
explained by a non-binary tree $(T,\lambda)$, then there is always a binary tree
$(T',\lambda')$ that refines $T$ and explains the same map $\eps$ by setting
$\lambda'(e) = \lambda(e)$ for all edges $e$ that are also in $T$ and by
choosing the label $\lambda'(e) = \otimes$ for all edges $e$ that are not
contained in $T$. In this section, we will show that whenever a relation $\eps$
is explained by an edge-labeled tree $(T,\lambda)$, then there exists a unique
least-resolved tree that explains $\eps$. We mainly follow here the proof
strategies as in \cite{Geiss:17}.

To establish the uniqueness of the least-resolved trees, 
we will consider so-called informative triples as shown in 
Fig.\ \ref{fig:inf-triple}. 
Due to Lemma \ref{lem:path-one-color},  it is an easy exercise to verify that 
each edge-labeled graph $G_i$, $i\in \{1,\dots,6\}$ in  Fig.\ \ref{fig:inf-triple}
is explained by the unique edge-labeled binary tree $T_i$, i.e., a specific labeled triple

\begin{figure}[t]
\begin{center}
  \includegraphics[width=0.8\textwidth]{./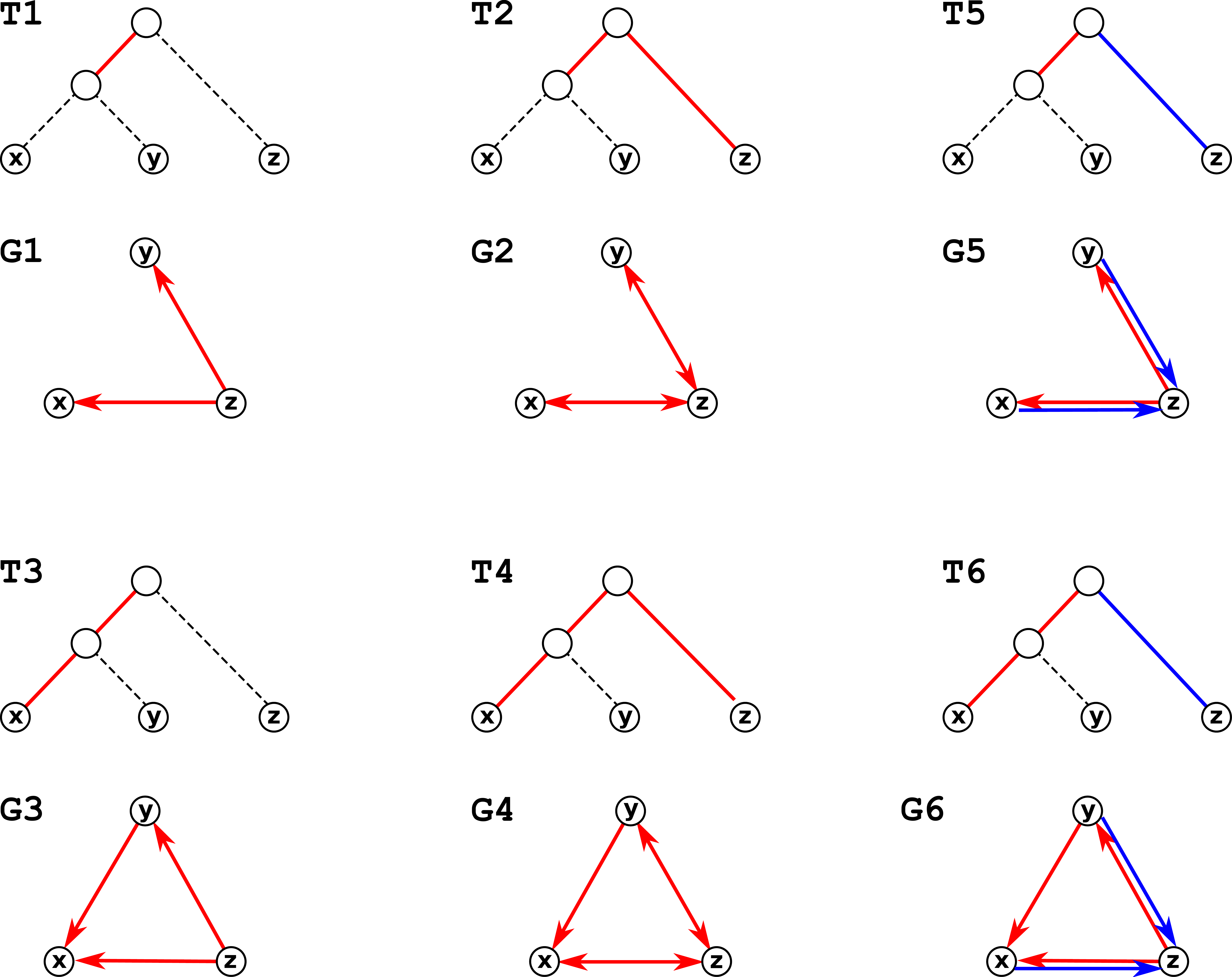}
\end{center}
\caption{Shown is the graph representation for six
			   possible 3-vertex induced edge-labeled subgraphs $G_1,\dots,G_6$ of a generalized Fitch graph$(K_{|X|},\eps)$
				 that is explained by a tree $(T,\lambda)$. 				 The $\otimes$-edges in each graph $G_i$ are omitted. 
				 Each subgraph $G_1,\dots,G_6$ is explained by the
				 unique edge-labeled triple $T_1,\dots,T_6$, respectively. In each tree $T_i$, the $\otimes$-edges are drawn as dashed-lines
					and red-edges and blue-edges correspond to two distinct symbols $m,m'\neq \otimes$. 
				 Edges in $T_1,\dots,T_6$ can be understood as
				  paths in $T$, whereby red-lined (resp. blue-lined, black-dashed) edges 
				  indicate that there is  an $m$-edge (resp. $m'$-edge, only $\otimes$-edges) on the particular path. }
\label{fig:inf-triple}
\end{figure}

\begin{definition} 
  An edge-labeled triple $ab|c$ is \emph{informative} if it explains a
	3-vertex induced subgraphs of a Fitch graph $(K_{|X|},\eps)$
	isomorphic to one of $G_1,\dots G_5$ or $G_6$. 
\end{definition} 
The observation that each graph $G_i$, $i\in \{1,\dots,6\}$ in  Fig.\ \ref{fig:inf-triple}
is explained by the \emph{unique} edge-labeled binary tree $T_i$ is crucial, as this implies that 
whenever $(K_{|X|},\eps)$ contains an induced subgraph of the form  $G_1,\dots G_5$ or $G_6$, 
then any tree explaining $(K_{|X|},\eps)$ must display the corresponding informative triple. 
Any tree-like relation $\eps$ can therefore be
	associated with a uniquely defined set $\IT{\eps}$ of \emph{informative
  triples} that it displays: $r\in \IT{\eps}$ if and only if $r$ is the
	unique edge-labeled triple explaining an induced subgraph isomorphic to
	$G_1,\dots G_5$ or $G_6$. For later reference we summarize this fact as
\begin{lemma}
  If $(T,\lambda)$ explains $\eps$, then all triples in $\IT{\eps}$ must be
  displayed by $(T,\lambda)$.
  \label{lem:display-inform}
\end{lemma}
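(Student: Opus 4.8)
The plan is to prove Lemma~\ref{lem:display-inform} by a direct argument: fix a tree $(T,\lambda)$ that explains $\eps$ and an arbitrary triple $r\in\IT{\eps}$, and show $r\in r(T)$. By definition of $\IT{\eps}$, the triple $r$ is the unique edge-labeled triple explaining some induced $3$-vertex subgraph $(K_{|X|},\eps)[\{a,b,c\}]$ isomorphic to one of $G_1,\dots,G_6$. First I would note that since $(T,\lambda)$ explains $\eps$, it also explains every induced subgraph of $(K_{|X|},\eps)$, in particular the subgraph on $\{a,b,c\}$; hence the restriction $(T_{|\{a,b,c\}},\lambda_{|\{a,b,c\}})$ is an edge-labeled tree on three leaves that explains that subgraph. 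The crucial fact, already observed in the text (and justified via Lemma~\ref{lem:path-one-color}), is that each $G_i$ is explained by a \emph{unique} edge-labeled binary tree $T_i$, which in particular has a fixed underlying triple topology $ab|c$.

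The key step is therefore to argue that the restriction $(T_{|\{a,b,c\}},\lambda_{|\{a,b,c\}})$ must coincide, as an edge-labeled triple, with $T_i$. For this I would observe that $T_{|\{a,b,c\}}$ is either the binary triple $ab|c$ (for some labelling of the two leaves $a,b$) or the star $K_{1,3}$; in the star case the resulting restricted labelled tree would again explain the same subgraph, contradicting uniqueness of $T_i$ unless $T_i$ itself were a star — but inspection of Fig.~\ref{fig:inf-triple} shows each $T_i$ is binary. Hence $T_{|\{a,b,c\}}$ is a binary triple, and by the uniqueness of the edge-labeled tree explaining $G_i$ its topology must be exactly that of $T_i$, namely $ab|c$. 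Consequently $\lca_T(a,b)\prec_T\lca_T(a,b,c)$, which is precisely the condition for $ab|c\in r(T)$; that is, $(T,\lambda)$ displays $r$. Since $r\in\IT{\eps}$ was arbitrary, all informative triples are displayed by $(T,\lambda)$.

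I expect the only real subtlety to be the treatment of the star case and making the ``uniqueness forces the topology'' step fully rigorous: one must be slightly careful that suppressing degree-$2$ vertices in $T(\{a,b,c\})$ and transferring the labels (via the $\lambda_{|Y}$ construction defined earlier) yields an edge-labeled tree that genuinely explains the induced subgraph on $\{a,b,c\}$ — this is exactly what the remark following the definition of $\lambda_{|X_m}$ guarantees, since $\lambda_{|Y}(u,v)=m$ iff the path between $u$ and $v$ in $T$ carries an $m$-edge, so the $\eps$-values among $a,b,c$ are preserved. Granting that, the argument is immediate: an edge-labeled tree explaining $G_i$ on leaf set $\{a,b,c\}$ exists and is unique, it is the binary triple $T_i$, and the restriction of any explaining tree must be such a tree, hence must display the topology of $T_i$. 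This is the whole content of the lemma, and no closure-operator machinery is needed at this point — that will only enter later when combining the triples in $\IT{\eps}$.
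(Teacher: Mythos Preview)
Your proposal is correct and follows exactly the line the paper itself takes. In the paper this lemma is not given a separate formal proof; it is stated as a summary of the observation made just before it, namely that each $G_i$ is explained by a \emph{unique} edge-labeled tree $T_i$ (which happens to be binary), so that any $(T,\lambda)$ explaining $\eps$ must, upon restriction to the three relevant leaves, reproduce precisely $T_i$ and hence display the triple $ab|c$. Your write-up simply spells this out, including the well-definedness of $\lambda_{|\{a,b,c\}}$ via Lemma~\ref{lem:path-one-color} and the exclusion of the star case by uniqueness; this is a faithful and slightly more detailed rendering of the paper's argument.
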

In what follows, we want to show that $\IT{\eps}$ identifies the least-resolved
tree that explains $\eps$. To this end, we will utilize the following two results. 
\begin{lemma}
	If $(\Ts,\ls)$ is a least-resolved tree for the tree-like map  $\eps:\irr{X\times X} \to \Mo$, 
	then $(\Ts,\ls)$ contains no inner $\otimes$-edges and
	any inner vertex $v\neq \rho_{\Ts}$ of $(\Ts,\ls)$ is incident to an outer $\otimes$-edge.
	\label{lem:inner-edge}
\end{lemma}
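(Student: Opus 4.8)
The plan is to prove the contrapositive of each half: if $(\Ts,\ls)$ has an inner $\otimes$-edge, or an inner vertex $v\neq\rho_{\Ts}$ with no incident outer $\otimes$-edge, then we can contract an edge and relabel so that the resulting smaller tree still explains $\eps$, contradicting least-resolvedness. Both arguments hinge on the key observation, already available from Lemma~\ref{lem:path-one-color} and Lemma~\ref{lem:basics}, that along any root-to-leaf path in $(\Ts,\ls)$ at most one non-$\otimes$ symbol appears; so whenever I contract an edge, I only ever need to worry about whether a single symbol $m$ survives on each affected path.

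First I would handle the inner $\otimes$-edge case. Let $e=uv$ be an inner $\otimes$-edge with $u$ the parent of $v$. Form $T'$ by contracting $e$, identifying $u$ and $v$, and keep $\lambda'=\ls$ on all remaining edges. I claim $(T',\lambda')$ still explains $\eps$. The only pairs $(x,y)\in\irr{X\times X}$ whose defining path $\mathbb{P}_{(x,y)}$ could change are those for which $e$ lies on the path from $\lca_{\Ts}(x,y)$ to $y$; but contracting a $\otimes$-edge removes no $m$-edge from that path and changes no least common ancestor relationship among the leaves (contraction of an inner edge never alters which displayed triples hold except possibly \emph{adding} resolution — here it removes resolution, but the path from $\lca(x,y)$ to $y$ still contains exactly the same multiset of non-$\otimes$ labels). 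Hence $\eps(x,y)$ is still correctly read off, so $(T',\lambda')$ explains $\eps$, contradicting that $(\Ts,\ls)$ is least-resolved. Therefore $(\Ts,\ls)$ has no inner $\otimes$-edge.

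Next, knowing all inner edges are non-$\otimes$, I would show every inner vertex $v\neq\rho_{\Ts}$ is incident to an outer $\otimes$-edge. Suppose not: let $v\neq\rho_{\Ts}$ be an inner vertex all of whose incident outer edges (if any) are non-$\otimes$, and let $e=uv$ be the inner edge from $v$'s parent $u$ to $v$; by the previous paragraph $\ls(e)=m$ for some $m\in M$. I want to contract $e$ without losing any information. The subtlety is that $e$ may be the \emph{unique} $m$-edge on the path from $u$'s side down to some leaf below $v$; if I simply delete $e$, a pair $(x,y)$ with $y$ below $v$ that used $e$ as its only $m$-edge would lose its label. But here is where the assumption bites: since $v$ has no incident outer $\otimes$-edge and all its incident inner edges are non-$\otimes$ (hence, by the single-symbol property, all labeled $m$), every edge incident to $v$ other than $e$ is also an $m$-edge — in particular every edge on the path from $v$ to any leaf below it starts with an $m$-edge incident to $v$. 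So after contracting $e$, any path from (the merged vertex) to a leaf formerly below $v$ still contains an $m$-edge; and any path to a leaf not below $v$ is unaffected except that it now starts one edge earlier, but the set of non-$\otimes$ labels on it is unchanged. Thus $(T',\lambda')$ with $\lambda'=\ls$ on all surviving edges still explains $\eps$, contradicting least-resolvedness.

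The main obstacle, and the step I would be most careful with, is the bookkeeping in the second case: verifying that contracting the $m$-edge $e=uv$ genuinely preserves every pair $(x,y)$, including pairs where $x$ is below $v$ and $y$ is not (here the path from $\lca(x,y)=u'$ for some ancestor $u'$ of $u$, or possibly $u$ itself, down to $y$ does not involve $e$, so nothing changes), and pairs where $y$ is below $v$ (the path from $\lca(x,y)$ to $y$ either already contained $e$, in which case it is replaced by the incident $m$-edge at the merged vertex, or $\lca(x,y)$ was $v$ itself, now merged into $u$, and the path still begins with an $m$-edge as argued). I would also need to confirm the degenerate subcases, e.g.\ when $v$ has only outer children (so $v$ is ``almost a leaf'' cherry-root), and when $u=\rho_{\Ts}$, checking the phylogenetic-tree degree conditions are maintained after contraction — these are routine but must be stated. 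Once that is pinned down, both claims follow and the lemma is proved.
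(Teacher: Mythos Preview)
Your proposal is correct and follows essentially the same approach as the paper's proof: contract an inner $\otimes$-edge to contradict least-resolvedness in the first part, and in the second part use Lemma~\ref{lem:path-one-color} to force all edges at $v$ (other than the parent edge $e$) to carry the same label $m$, then contract $e$. The only cosmetic difference is ordering: the paper first analyzes the child-edge set $F$ and deduces $\ls(e)\in\{m,\otimes\}$, whereas you start from $\ls(e)=m$ and push the single-symbol constraint downward; both arrive at the same contraction.
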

\begin{proof}
	First, assume for contradiction that the least-resolved tree 
	$(\Ts,\ls)$ contains an inner $\otimes$-edge $e=uv$. 
	The contraction of the edge $e$ does not change the number
  of $m$-edges with $m\neq \otimes$ along the paths connecting any two leaves. 
	It affects the least  common ancestor of $x$ and $y$, if $\lca_T(x,y)=u$ or $\lca_T(x,y)=v$. In
  either case, however, the number of $m$-edges between the $\lca_T(x,y)$ and
  the leaves $x$ and $y$ remains unchanged. Hence, the map $\eps$ can still be explained
	by the tree that is obtained from $(\Ts,\ls)$ after contraction of $e$. 
	Thus, $(\Ts,\ls)$ is not least-resolved; a contradiction. 

	We continue to show that any inner vertex $v$ must be incident to some outer $\otimes$-edge. 
	Let $e = uv$ be the edge in $\Ts$ with $u\succ_{\Ts}v$.
	Let $F$ be the set of edges that are incident to $v$ and distinct from $e$. 
	First assume, for contradiction, that all edges in $F$
	have a label different from $\otimes$. 
	If there are two edges $f,f'\in F$ with distinct labels, then 
	Lemma \ref{lem:path-one-color} implies that $e$ must be an $\otimes$-edge. 
	However, $(\Ts,\ls)$ contains no inner $\otimes$-edges and, 
	hence, all edges in $F$ must have the same label $m\neq \otimes$. 
	In this case, Lemma \ref{lem:path-one-color} implies that 
	the label $\ls(e)$ of $e$ must be $m$ or $\otimes$. 
	In either case, the edge $e$ can be contracted, since 
	every path from $u$ to a leaf contains already an $m$-edge that is incident to $v$. 
	Thus, $(\Ts,\ls)$ is not least-resolved; a contradiction. 
	Therefore, $v$ must be incident to at least one $\otimes$-edge $f\in F$.
	Since 	$(\Ts,\ls)$ contains no inner $\otimes$-edges, the edge $f$ must be an outer-edge.
\end{proof}

\begin{lemma}
	Each inner edge in a least-resolved tree  $(\Ts,\ls)$  for a tree-like map 
	$\eps:\irr{X\times X} \to \Mo$, is distinguished by at least one informative triple
	in $\IT{\eps}$. 
	\label{lem:edge-dist}
\end{lemma}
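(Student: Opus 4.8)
The plan is to show that every inner edge of a least-resolved tree $(\Ts,\ls)$ is distinguished by an informative triple, by exploiting the structural properties of least-resolved trees established in Lemma~\ref{lem:inner-edge}. So let $e = uv$ be an inner edge of $(\Ts,\ls)$ with $u \succ_{\Ts} v$. By Lemma~\ref{lem:inner-edge}, $e$ is not an $\otimes$-edge, so $\ls(e) = m$ for some $m \in M$; moreover $v$ is incident to an outer $\otimes$-edge $vc$ for some leaf $c$. Since $e$ is an inner edge, $v$ has at least two children, so there is a leaf $a \prec_{\Ts} v$ with $a \neq c$; and since $u$ is an inner vertex (and $v$ is a child of $u$), $u$ has another child whose subtree contains a leaf $b$ with $\lca_{\Ts}(a,b) = u$. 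By construction we then have $\lca_{\Ts}(a,c) = v$ (as both $a$ and $c$ descend from $v$, and $c$ hangs directly off $v$), while $\lca_{\Ts}(a,b) = \lca_{\Ts}(a,b,c) = u$. Thus the triple $ac|b$ is displayed by $(\Ts,\ls)$ and distinguishes the edge $uv = e$ in the sense defined in the preliminaries.

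\medskip

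\noindent
\textbf{The remaining task} is to check that the edge-labeled triple on $\{a,b,c\}$ induced by $(\Ts,\ls)$ is in fact \emph{informative}, i.e.\ that the induced $3$-vertex subgraph $(K_{|X|},\eps)[\{a,b,c\}]$ is isomorphic to one of $G_1,\dots,G_6$. Here I would read off the relevant arc labels from the tree: the path $\mathbb{P}_{(a,c)}$ from $v$ to $c$ is a single $\otimes$-edge, so $\eps(a,c) = \otimes$; the path $\mathbb{P}_{(c,a)}$ from $v$ to $a$ lies entirely below $v$ and contains no edge above $e$. The key point is the path from $u$ to $a$ (and to $c$): it passes through the $m$-edge $e = uv$, so the path $\mathbb{P}_{(b,a)}$ from $u$ to $a$ contains an $m$-edge, giving $\eps(b,a) = m$ (using Lemma~\ref{lem:path-one-color}, no other non-$\otimes$ label can appear on this root-to-$a$ path). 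Similarly $\eps(b,c) = m$. The labels $\eps(a,b)$, $\eps(c,a)$, $\eps(c,b)$ depend on the finer structure of the subtrees, but in every case the presence of the $\otimes$-arc $a \to c$ together with the two parallel $m$-arcs $b \to a$ and $b \to c$ forces the pattern to match one of $G_1,\dots,G_6$ (these are precisely the $3$-vertex configurations whose explaining tree has the shape of $ac|b$ with an $m$-labeled stem-edge and an $\otimes$ outer edge to $c$). I would verify this by a short case analysis on $\eps(a,b) \in \{m,\otimes\}$ and, where needed, $\eps(c,a), \eps(c,b)$, matching each resulting pattern against Fig.~\ref{fig:inf-triple}.

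\medskip

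\noindent
\textbf{The main obstacle} I anticipate is precisely this last case analysis: one must be careful that the induced subgraph is never isomorphic to the ``non-informative'' $3$-vertex pattern (the one whose explaining tree is not a resolved triple, or the all-$\otimes$ / symmetric cases), since only $G_1,\dots,G_6$ are declared informative. The way to rule this out is to use the two structural facts decisively: $\eps(a,c) = \otimes$ (so $a,c$ are \emph{not} symmetric to $b$) while $\eps(b,a) = \eps(b,c) = m \neq \otimes$, which already breaks all the symmetry that a non-informative pattern would require, and pins down that any tree explaining the induced subgraph must display $ac|b$ with the stem carrying an $m$-edge --- exactly the defining property of an informative triple. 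Once the case analysis confirms the isomorphism type lands in $\{G_1,\dots,G_6\}$, the informative triple $ac|b \in \IT{\eps}$ distinguishes $e$, and since $e$ was an arbitrary inner edge, the claim follows.
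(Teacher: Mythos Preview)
Your proposal is correct and follows essentially the same line as the paper's proof: both pick the inner edge $e=uv$, use Lemma~\ref{lem:inner-edge} to obtain $\ls(e)=m\in M$ and an outer $\otimes$-edge at $v$, select a second leaf below $v$ and a third leaf in a sibling subtree of $v$ at $u$, and then argue that the resulting induced $3$-vertex configuration matches one of $G_1,\dots,G_6$. The paper just makes the final case analysis more explicit as a $2\times 3$ split (two cases for the path from $v$ to the non-$\otimes$ leaf below $v$, three cases for the path from $u$ to the third leaf), whereas you sketch it; note in particular that $\eps(a,b)$ may also take a third value $m'\notin\{m,\otimes\}$, which your sketch should include to cover all six patterns.
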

\begin{proof}
	Consider an arbitrary inner edge $e=uv$ of $\Ts$ with $u\succ_{\Ts} v$. Since $(\Ts,\ls)$ is phylogenetic, 
	there are necessarily leaves  $x$, $y$, and $z$ such that $\lca(x,y) = v$ and $\lca(x,y,z) = u$. 
	In particular, one can choose $y$ such that $vy$ is an outer $\otimes$-edge, since 
	$(\Ts,\ls)$ is least-resolved and due to Lemma \ref{lem:inner-edge}. 
	Moreover, Lemma \ref{lem:inner-edge} implies that $\ls(e) = m\neq \otimes$. 
	Lemma \ref{lem:path-one-color} implies that all edges $f$ that are located in $\Ts$ below
	$e$ must be $\otimes$- or $m$-edges. Thus, there are two exclusive cases for the
	path from $\lca(x,y)$ to $x$: Either the path contains (a)  only $\otimes$-edges
	or (b) at least one $m$-edge. 
	Moreover, the path $P_{u,z}$ from $u$ to $z$ contains either (A) 
	only $\otimes$-edges or (B) an $m$-edge or (C) an $m'$-edge with $m'\neq m,\otimes$.
	Note, Lemma \ref{lem:path-one-color}  implies that 
	in case (A) (resp.\ (B)) all edges in $P_{u,z}$ must be $m$- or $\otimes$-edges
	(resp.\ $m'$- or $\otimes$-edges). 
	Now, the combination of the Cases (a) and (b) with (A), (B) or (C)
	immediately implies that the tree on $\{x,y,z\}$ displayed by $\Ts$ must be one of the trees
	$T_1,\dots,T_5$ or $T_6$ as shown in Fig.\ \ref{fig:inf-triple}. 
	Therefore, 	$xy|z\in \IT{\eps}$. 
	Since $\lca(x,y) = v$ and $\lca(x,y,z) = u$, the edge
  $e$ is by definition distinguished by the triple $xy|z\in \IT{\eps}$.	
\end{proof}

\begin{theorem}	
	Let $\eps:\irr{X\times X} \to \Mo$ be a tree-like map and $(\Ts,\ls)$ be a least-resolved tree that 
	explains $\eps$. Then,
	the set $\IT{\eps}$ identifies $(\Ts,\ls)$ and $\Aho(\IT{\eps}) = \Ts$. 
	In particular,  $(\Ts,\ls)$  	is unique up to isomorphism.
\end{theorem}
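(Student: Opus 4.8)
The plan is to invoke Lemma~\ref{g1}: it suffices to show that $\cl(\IT{\eps}) = r(\Ts)$, since this immediately yields that $\IT{\eps}$ identifies $\Ts$ and that $\Aho(\IT{\eps})=\Ts$, and uniqueness up to isomorphism of $\Ts$ then follows because any two least-resolved trees for $\eps$ are both identified by the same triple set $\IT{\eps}$ and hence both equal $\Aho(\IT{\eps})$. Since $\IT{\eps}\subseteq r(\Ts)$ by Lemma~\ref{lem:display-inform}, idempotency and isotony of $\cl$ give $\cl(\IT{\eps})\subseteq\cl(r(\Ts))=r(\Ts)$; so the real work is the reverse inclusion $r(\Ts)\subseteq\cl(\IT{\eps})$, or equivalently that every tree displaying $\IT{\eps}$ refines $\Ts$.

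First I would establish that $\IT{\eps}$ identifies $\Ts$ directly from the edge-distinguishing machinery, which is the cleaner route. The standard fact (from the \texttt{BUILD}/closure theory, e.g.\ \cite{grunewald_closure_2007,semple_phylogenetics_2003}) is that a triple set $R\subseteq r(T)$ identifies $T$ precisely when every inner edge of $T$ is distinguished by some triple in $R$. Lemma~\ref{lem:edge-dist} gives exactly this: every inner edge of the least-resolved tree $(\Ts,\ls)$ is distinguished by an informative triple in $\IT{\eps}$. Hence $\IT{\eps}$ identifies $\Ts$, and then Lemma~\ref{g1} yields $\cl(\IT{\eps})=r(\Ts)$ and $\Aho(\IT{\eps})=\Ts$.

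It remains to deduce uniqueness. Suppose $(\Ts_1,\ls_1)$ and $(\Ts_2,\ls_2)$ are both least-resolved trees explaining $\eps$. The set $\IT{\eps}$ depends only on $\eps$, not on the chosen tree, since a triple lies in $\IT{\eps}$ iff it is the unique edge-labeled triple explaining some $3$-vertex induced subgraph of $(K_{|X|},\eps)$ isomorphic to one of $G_1,\dots,G_6$. By the argument above, $\IT{\eps}$ identifies $\Ts_1$ and identifies $\Ts_2$, and $\Aho(\IT{\eps})=\Ts_1$ as well as $\Aho(\IT{\eps})=\Ts_2$; since $\Aho(\IT{\eps})$ is a single tree, $\Ts_1\cong\Ts_2$. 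Finally the labelings agree: by Lemma~\ref{lem:inner-edge} every inner edge of a least-resolved tree is an $m$-edge for some $m\in M$, and by Lemma~\ref{lem:edge-dist} that $m$ is forced by the informative triple distinguishing the edge (the red/blue color in $T_1,\dots,T_6$ is determined by $\eps$); the outer $\otimes$-edges incident to inner vertices and the labels of the remaining outer edges are likewise pinned down by $\eps$ via the explanation condition. Hence $(\Ts,\ls)$ is unique up to isomorphism.

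The main obstacle is making the passage ``every inner edge is distinguished $\Rightarrow$ the triple set identifies the tree'' airtight: one must either cite the corresponding lemma from \cite{grunewald_closure_2007} (or \cite{semple_phylogenetics_2003,steel_phylogeny:_2016}) in the precise form needed, or argue it by hand — given a tree $T'$ displaying $\IT{\eps}$, show that each inner edge $uv$ of $\Ts$ survives as a nontrivial cluster in $T'$ because the distinguishing triple $xy|z$ forces $\lca_{T'}(x,y)\prec_{T'}\lca_{T'}(x,y,z)$, and then check that these clusters are pairwise compatible and pin $T'$ down to a refinement of $\Ts$. A secondary point to handle carefully is that $\IT{\eps}$ is consistent in the first place, which follows since $\Ts$ displays it; this is needed for $\cl(\IT{\eps})$ and $\Aho(\IT{\eps})$ to be defined.
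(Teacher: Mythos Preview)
Your proposal is correct and follows the same overall architecture as the paper: show $\IT{\eps}$ identifies $\Ts$, apply Lemma~\ref{g1}, then argue uniqueness of both the tree shape and the labeling. The one substantive difference is in how the identification step is established. You invoke the external characterization ``$R\subseteq r(T)$ identifies $T$ iff every inner edge of $T$ is distinguished by some triple of $R$'' and combine it with Lemma~\ref{lem:edge-dist}. The paper instead proves $\cl(\IT{\eps})=r(\Ts)$ directly: assuming some $ab|c\in r(\Ts)\setminus\cl(\IT{\eps})$, it uses Lemma~\ref{lem:inner-edge} to locate an outer $\otimes$-edge $vd$ below $\lca(a,b,c)$, verifies that $ad|c$ and $bd|c$ are informative, and then applies the Dekker inference $ad|c,\,bd|c\Rightarrow ab|c$ to obtain the contradiction $ab|c\in\cl(\IT{\eps})$. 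Your route is shorter and cleaner if one is willing to import the identification lemma; the paper's route is self-contained and needs only the elementary two-triple inference rule. For the uniqueness of the labeling, the paper argues more concretely than you do: it shows that relabeling any outer edge changes some $\eps(w,v)$, and that for an inner edge $uv$ the adjacent outer $\otimes$-edge $vw$ together with any leaf $x$ with $\lca(w,x)=u$ forces $\ls(uv)$ via $\eps(x,w)$; your sketch via informative triples and the ``explanation condition'' is morally the same but would benefit from being spelled out at that level.
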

\begin{proof}
  We start with showing that $\IT{\eps}$ identifies $\Ts$. 
	If $\IT{\eps}= \emptyset$, then $(\Ts,\ls)$ must be a star tree, i.e., 
	an edge-labeled tree that consists of outer edges only.
	Otherwise, $(\Ts,\ls)$ contains inner edges
	that are, by Lemma \ref{lem:edge-dist}, distinguished by at least one informative rooted triple in
  $\IT{\eps}$, contradicting that $\IT{\eps} = \emptyset$.  Hence,
  $r(\Ts)=\emptyset$, and therefore, $r(\Ts) = \cl(\IT{\eps})$.  Lemma 
	\ref{g1}  implies that $\IT{\eps}$ identifies $(\Ts,\ls)$.	

  In the case $\IT{\eps} \neq \emptyset$, assume for contradiction that $r(\Ts)
  \neq \cl(\IT{\eps})$. By Lemma \ref{lem:display-inform} we have $\IT{\eps}
  \subseteq r(\Ts)$. Isotony of the closure, Theorem 3.1(3) in
  \cite{Bryant97}, ensures $\cl(\IT{\eps}) \subseteq \cl(r(\Ts))=r(\Ts)$. Our
	  assumption therefore implies $\cl(\IT{\eps}) \subsetneq r(\Ts)$, and thus the
  existence of a triple $ab|c \in r(\Ts)\setminus \cl(\IT{\eps})$. In
	  particular, therefore, $ab|c \notin\IT{\eps}$. Note that neither $ac|b$ nor
  $bc|a$ can be contained in $\IT{\eps}$, since $(\Ts,\ls)$ explains $\eps$
	  and, by assumption, already displays the triple $ab|c$. Thus, $\IT{\eps}$
  contains no triples on $\{a,b,c\}$. 	

	Let $u=\lca(a,b,c)$ and $e = uv$ be the edge in $\Ts$ with 
	$u\succ_{\Ts} v \succeq_{\Ts} \lca(a,b)$. By Lemma \ref{lem:inner-edge}, 
	the edge $e$ must be an $m$-edge with $m\neq \otimes$. 
	Let $T_{abc}$ be the subtree of $(T,\lambda)$ with leaves $a,b,c$. 
	Since $e$ is an $m$-edge, Lemma \ref{lem:path-one-color} implies that
	all edges along the paths from $v$ to $a$ and $v$ to $b$
	must be $m$- or $\otimes$-edges. However, 
	since $ab|c \notin\IT{\eps}$, the tree $T_{abc}$ cannot be 
	isomorphic to the subtree $T_1,\dots,T_6$ and thus, 
	both paths from $\lca(a,b)$ to $a$ and $\lca(a,b)$ to $b$ must contain $m$-edges. 
	
	Moreover, Lemma \ref{lem:inner-edge} implies that there must be an outer $\otimes$-edge 
	$f=vd$. By the discussion above, $d\neq a,b$. 
	Thus, the  the subtrees $T_{acd}$ and
  $T_{bcd}$ of $\Ts$ with leaves $a,c,d$ and $b,c,d$, respectively, correspond to
  one the trees $T_3,T_4$ and $T_6$. 
	By construction, $ad|c\in \IT{\eps}$ and
  $bd|c\in\IT{\eps}$.  Hence, any tree that explains $\eps$ must display $ad|c$
  and $bd|c$.  As shown in \cite{Dekker86}, a tree displaying $ad|c$ and
  $bd|c$ also displays $ab|c$.  This implies, however, that $ab|c \in
  \cl(\IT{\eps})$, a contradiction to our assumption.
	
	Therefore, $\cl(\IT{\eps})=r(T)$ and we can apply Lemma \ref{g1} to
  conclude that $\IT{\eps}$ identifies $(\Ts,\ls)$ and $\Aho(\IT{\eps}) = \Ts$.

	We continue to show the uniqueness of $(\Ts,\ls)$.
	Since  $\IT{\eps}$ identifies $(\Ts,\ls)$, 
	any tree that displays $\IT{\eps}$ is by definition a  refinement of $(\Ts,\ls)$.
	In addition, any tree that explains $\eps$ must display $\IT{\eps}$
	(cf.\ Lemma \ref{lem:display-inform}). 
	Taken the latter two arguments together, any tree that explains $\eps$ must be a refinement
	of $(\Ts,\ls)$. 

	To establish uniqueness of $(\Ts,\ls)$ it remains to show that	
	there is no other labeling $\lambda$ such that $(\Ts,\lambda)$
	still explains $\eps$. 
	Let $e= uv$ be an outer edge. Hence,  changing the label of $e$ would 
	immediately change the label $\eps(w,v)$ between
  $v$ and any leaf $w$ located in a subtree rooted at a sibling of $v$.
  Since at least one such leaf $w$ exists in a phylogenetic tree, the edge
	$e$ 	cannot be re-labeled.
	Now suppose that $e= uv$ is an inner edge with $u\succ_{\Ts} v$. 
	By Lemma \ref{lem:inner-edge}, the edge $e$ must  be $m$-edge and $m\neq \otimes$.  
	and  there must be an outer $\otimes$-edge $f=vw$.
	Let $x$ be a leaf such that $\lca(w,x)=u$. Since $\Ts$ is a
  phylogenetic tree, such a leaf always exists. Then $\eps(x,w)=m$
	if and only if $\lambda(e)=m$, i.e., the inner edge $e$ cannot
	be re-labeled. This establishes the final statement. 
\end{proof}

\section{Summary and Outlook}

We have considered maps $\eps: \irr{X \times X}\to \Mo$ and
edge labeled di-graphs $(K_{|X|},\eps)$
that can be explained by edge-labeled phylogenetic trees. Such graphs
generalize the notion of xenology and simple Fitch graphs \cite{Geiss:17,Hel:18}. 
As a main result, we gave a characterization of  Fitch graphs
based on  four simple conditions \emph{(T1)} to \emph{(T4)}
that are defined in terms of underlying edge-disjoint subgraphs. 
This in turn led to an $O(|X|^2)$-time  algorithm
to recognize Fitch graphs  $(K_{|X|},\eps)$ and for the reconstruction of 
the \emph{unique} least-resolved $\Mo$-edge-labeled phylogenetic tree that can explain them.

From the combinatorial point of view it might be of interest to consider more general maps
$\Eps:\irr{X\times X} \to \mathcal{P}(M) \cup \{\otimes\}$, where 
$\mathcal{P}(M)$ denotes the powerset of $M$.
In this case, there are a couple of ways to define
when $\Eps$ is tree-like. The two most obvious
ways, which we call ``Type-1''  and ``Type-2'' tree-like,
 are stated here.

\begin{description} 
	\item[\normalfont{\emph{The map $\Eps$ is tree-like }}]
	\item[\normalfont{\emph{of Type-1},}] if there is an edge-labeled tree $(T,\lambda)$ on $X$
such that for at least one $m\in \Eps(x,y)$ there is an edge 
on the path from  $\lca(x,y)$ to $y$ with label $m$.
	\item[\normalfont{\emph{of Type-2},}] if there is an edge-labeled tree $(T,\lambda)$ on $X$
such that for all $m\in \Eps(x,y)$ there is an edge 
on the path from  $\lca(x,y)$ to $y$ with label $m$.
\end{description}

Note, if $|M|=1$ or $|\Eps(x,y)|=1$ for all $x,y\in X$, 
then the problem of determining whether 
$\Eps$ is Type-1 or Type-2 tree-like
 reduces to the problem of determining whether $(K_{|X|}, \Eps)$
 is a Fitch graph or not. 
Moreover, if the sets $\Eps(x,y)$, $x,y\in X$ are pairwise disjoint,
we can define a set $N=\{m_{\Eps(x,y)} \mid x,y\in X\}$ of
symbols that identifies each symbol $m_{\Eps(x,y)}$ with the set $\Eps(x,y)$.
The established results imply the following 
\begin{corollary}
If the map $\eps:\irr{X\times X} \to N\cup \{\otimes\}$
with  $\eps(x,y)=m_{\Eps(x,y)}$ is tree-like, then
the map  $\Eps$ is  tree-like of Type-1. 
\end{corollary}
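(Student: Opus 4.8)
The claim is that if the sets $\Eps(x,y)$ are pairwise disjoint, we collapse each to a single fresh symbol $m_{\Eps(x,y)}$ to obtain a map $\eps:\irr{X\times X}\to N\cup\{\otimes\}$, and $\eps$ turns out to be tree-like in the sense of this paper, then $\Eps$ is tree-like of Type-1. The plan is to simply take the tree $(\Ts,\ls)$ that explains $\eps$ and argue that it already witnesses Type-1 tree-likeness of $\Eps$. The point is that Type-1 only requires, for each pair $(x,y)$ with $\Eps(x,y)\neq\otimes$, the existence of \emph{at least one} $m\in\Eps(x,y)$ appearing as an edge label on $\mathbb{P}_{(x,y)}$; it does not demand that \emph{every} member of $\Eps(x,y)$ occur, nor that no other labels occur.

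First I would set up the obvious translation: let $(T,\lambda)$ with $\lambda:E\to N\cup\{\otimes\}$ be a tree that explains $\eps$, which exists by hypothesis. Fix a pair $(x,y)\in\irr{X\times X}$. If $\Eps(x,y)=\otimes$, then by construction $\eps(x,y)=\otimes$, so by the defining condition of ``$(T,\lambda)$ explains $\eps$'' none of the edges on $\mathbb{P}_{(x,y)}$ carry a label in $N$; since every label in $N$ corresponds (via the identification) to some nonempty $\Eps(x',y')$, the path $\mathbb{P}_{(x,y)}$ carries only $\otimes$-labels, and the Type-1 condition for a $\otimes$-pair is vacuously (or by the same clause) satisfied. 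If instead $\Eps(x,y)=S\neq\otimes$, then $\eps(x,y)=m_S\in N$, so there is an edge $e$ on $\mathbb{P}_{(x,y)}$ with $\lambda(e)=m_S$. Now relabel $(T,\lambda)$ into an $M\cup\{\otimes\}$-edge-labeled tree $(T,\mu)$ by picking, for each edge $e$ with $\lambda(e)=m_S$ for some $S$, an arbitrary representative element $\mu(e)\in S$ (this is where disjointness of the $\Eps(x,y)$ is used implicitly: it guarantees the sets $S$ are well-defined as the preimages and makes $N$ meaningful, though the relabeling itself just needs $S\neq\emptyset$), and leaving $\mu(e)=\otimes$ on the remaining edges. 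Then $\mu(e)\in S=\Eps(x,y)$ is an element of $\Eps(x,y)$ realized on $\mathbb{P}_{(x,y)}$, which is exactly the Type-1 requirement for this pair.

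Quantifying over all pairs $(x,y)$ then yields that $(T,\mu)$ witnesses $\Eps$ being tree-like of Type-1, completing the proof. A cleaner phrasing avoids the relabeling step entirely: one can note that for each pair with $\Eps(x,y)=S\neq\otimes$, the edge $e$ on $\mathbb{P}_{(x,y)}$ with $\lambda(e)=m_S$ already ``carries'' all of $S$ in the bookkeeping sense, and since $S\subseteq N$ we may treat $\lambda$ as an $N\cup\{\otimes\}$-labeling where choosing any $m\in S$ for $e$ keeps the Type-1 condition true; but the relabeling version is more transparent about producing a genuine $M\cup\{\otimes\}$-labeled tree as demanded by the definition of Type-1. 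Either way, I expect no real obstacle here: the only subtlety is being careful that the Type-1 definition is an existential-over-$m$ condition and imposes \emph{no} constraint on the $\otimes$-pairs beyond what comes for free, so that dropping down from the rigid single-symbol map to the set-valued map can only make life easier. The ``hardest'' part is purely notational—keeping straight the identification $m_{\Eps(x,y)}\leftrightarrow\Eps(x,y)$ and checking the $\otimes$ case does not cause trouble.
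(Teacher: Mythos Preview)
The paper does not supply a proof of this corollary; it is stated as an immediate consequence of the preceding results (``The established results imply the following''), so there is no argument to compare against beyond the intended one-line observation. Your proposal is correct and is precisely the natural way to spell out that observation: take a tree $(T,\lambda)$ explaining $\eps$, replace each label $m_S\in N$ on an edge by some chosen element of $S$, and verify the Type-1 existential condition pairwise. The only delicate point you already flag yourself---namely that the Type-1 definition, as written, imposes no explicit constraint on pairs with $\Eps(x,y)=\otimes$---is handled correctly: since $\eps(x,y)=\otimes$ forces all edges on $\mathbb{P}_{(x,y)}$ to be $\otimes$-edges under $\lambda$, the relabeled tree $(T,\mu)$ still has only $\otimes$-edges there, so whichever reading of the $\otimes$-clause one adopts is satisfied.
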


It would be of interest to understand  such generalized tree-like maps
in more detail. To this end, results established in \cite{SEMPLE1999300, BS:95, HSM:18} 
might shed some light on this question. 
Moreover, maps that cannot be explained by trees 
may be explained by phylogenetic networks, an issue that has not been addressed so-far.

\bibliographystyle{plain}
\bibliography{biblio}
\end{document}